\documentclass{IEEEtran4PSCC}
\ifCLASSINFOpdf
   \usepackage[pdftex]{graphicx}
\else
   \usepackage[dvips]{graphicx}
\fi
\usepackage[cmex10]{amsmath}
\usepackage{amssymb}

\usepackage{algorithm}
\usepackage{algpseudocode}
\usepackage{multirow}
\usepackage[dvipsnames]{xcolor} 
\usepackage[normalem]{ulem}
\usepackage{cuted}  % or \usepackage{sttools} in some templates

\usepackage{amsthm} % for proof environment
          
\newtheorem{Proposition}{Proposition}

\theoremstyle{remark}
\newtheorem*{rmk}{Remark}

\usepackage{graphicx}      
\usepackage[export]{adjustbox}

\usepackage[intoc]{nomencl}
\makenomenclature
\DeclareSymbolFont{bbold}{U}{bbold}{m}{n}
\DeclareSymbolFontAlphabet{\mathbbold}{bbold}

\usepackage{accents}
\newcommand{\ubar}[1]{\underaccent{\bar}{#1}}
\newcommand{\utilde}[1]{\underaccent{\tilde}{#1}}

\usepackage{balance}

\allowdisplaybreaks

\usepackage{mathtools} % use \coloneqq

\newcommand{\edit}[1]{#1}

\makeatletter
\let\old@ps@headings\ps@headings
\let\old@ps@IEEEtitlepagestyle\ps@IEEEtitlepagestyle
\def\psccfooter#1{%
    \def\ps@headings{%
        \old@ps@headings%
        \def\@oddfoot{\strut\hfill#1\hfill\strut}%
        \def\@evenfoot{\strut\hfill#1\hfill\strut}%
    }%
    \def\ps@IEEEtitlepagestyle{%
        \old@ps@IEEEtitlepagestyle%
        \def\@oddfoot{\strut\hfill#1\hfill\strut}%
        \def\@evenfoot{\strut\hfill#1\hfill\strut}%
    }%
    \ps@headings%
}
\makeatother

\psccfooter{%
    \parbox{\textwidth}{\hrulefill\\ \small{\copyright\ 2026 The Author(s). This manuscript version is made available under the CC BY-NC-ND 4.0 license.}}%
}

\begin{document}
%
% paper title
% Titles are generally capitalized except for words such as a, an, and, as,
% at, but, by, for, in, nor, of, on, or, the, to and up, which are usually
% not capitalized unless they are the first or last word of the title.
% Linebreaks \\ can be used within to get better formatting as desired.
% Do not put math or special symbols in the title.
\title{Unified Sensitivity-Based Heuristic for Optimal Line Switching and Substation Reconfiguration}

%% To specify the authors when (number of affiliations <= 2)
\author{
\IEEEauthorblockN{Zongqi Hu, Weiqi Meng, and Bai Cui}
\IEEEauthorblockA{Department of Electrical and Computer Engineering \\
Iowa State University\\
Ames, IA, USA\\
\{zqhu, mengwq, baicui\}@iastate.edu}
% \and
% \IEEEauthorblockN{Author n.1 Name per Affiliation B\\ Author n.2 Name per Affiliation B}
% \IEEEauthorblockA{(Affiliation B) Department Name of Organization \\
% Name of the organization, acronyms acceptable\\
% City, Country\\
% \{email author n.1, email author n.2\}@domain (if desired)}
}

%% To specify the authors when (number of affiliations > 2)
% \author{\IEEEauthorblockN{Author n.1\IEEEauthorrefmark{1},
% Author n.2\IEEEauthorrefmark{2},
% Author n.3\IEEEauthorrefmark{3}, 
% Author n.4\IEEEauthorrefmark{3} and
% Author n.5\IEEEauthorrefmark{4}}
% \IEEEauthorblockA{\IEEEauthorrefmark{1} Department Name of Organization A\\
% Name of the organization A,
% Address A\\ Emails if wanted}
% \IEEEauthorblockA{\IEEEauthorrefmark{2} Department Name of Organization B\\
% Name of the organization B,
% Address B\\ Emails if wanted}
% \IEEEauthorblockA{\IEEEauthorrefmark{3} Department Name of Organization C\\
% Name of the organization C,
% Address C\\ Emails if wanted}
% \IEEEauthorblockA{\IEEEauthorrefmark{4}Department Name of Organization D\\
% Name of the organization D,
% Address D\\ Emails if wanted}
% }

% make the title area
\maketitle

% As a general rule, do not put math, special symbols or citations
% in the abstract
\begin{abstract}
Optimal transmission switching (OTS) determines which transmission lines to remove from service to minimize dispatch costs. Unlike topology design, it alters the operational status of operating lines. Sensitivity-based methods, as advanced optimization techniques, select lines whose outage yields a significant cost reduction. However, these methods overlook bus splitting, an effective congestion management strategy that our work incorporates to achieve improved economic gains. In this work, we formulate an optimal transmission reconfiguration (OTR) problem that incorporates both line switching and bus splitting. We develop a novel approach to quantify the sensitivity of the OTR objective to line switching and bus splitting, establish connections between the proposed sensitivity framework and existing heuristic metrics, prove the equivalence between bus splitting and a generalized line switching to enable unified treatment, and provide a simpler derivation of Bus Split Distribution Factor (BSDF). Simulations on nine IEEE test systems spanning 118 to 13,659 buses demonstrate the high effectiveness of our proposed sensitivity method. They also demonstrate that incorporating bus splitting into transmission reconfiguration achieves greater cost savings than line switching alone. The results confirm the economic advantage of this comprehensive approach to transmission system operation.
\end{abstract}
\begin{IEEEkeywords}
Bus split, distribution factor, optimal transmission switching, sensitivity analysis, substation reconfiguration.
\end{IEEEkeywords}

% Use this to place sponsorships
\thanksto{\noindent This work was supported by the Electric Power Research Center (EPRC) at Iowa State University.}

\section{Introduction}
Optimal transmission switching (OTS) has received renewed interest worldwide as one of the grid-enhancing technologies (GETs) enabling cost-effective alternatives to costly system upgrades for congestion management and improved market efficiency. While the industry has a long history of applying transmission switching for reliability purposes such as voltage violation prevention and overload relief, its use for economic optimization is a more recent development \cite{ots}. Despite its potential, solving the OTS problem to global optimality remains computationally challenging. The problem is known to be NP-hard \cite{lehmann2014complexity}, implying that finding a polynomial-time algorithm for the general case is highly unlikely. In addition, OTS is typically subject to stringent time constraints in practical applications, necessitating scalable solution methods. As a result, practical implementations often rely on heuristic or approximation methods to obtain high-quality solutions within a reasonable computation time.

In one heuristic-based approach, line switching decisions are obtained by sequentially solving the problem with each candidate line being allowed to open one at a time \cite{torres2014practical}. Sensitivity-based classification methods in \cite{sensitivity} model line switching by setting the upper and lower bounds of line power flow to be zero, identifying candidate lines whose switching yields maximum cost reduction. A feasible region-based heuristic method is proposed in \cite{crozier2022feasible}, where differences in operational costs between economic dispatch and DCOPF are compared; if these values differ, the non-zero multipliers of constraints are ranked, and the line power flow constraints corresponding to these multipliers are iteratively removed, and the problem is re-solved.
Additionally, integrating heuristic strategies can enhance the computational performance of mixed-integer linear programming (MILP) solvers; for example, the authors of \cite{hinneck2022optimal} propose a parallel algorithmic architecture for solving MILP-based OTS. \edit{In \cite{marot2025superposition}, an overall framework including line outage and line reconnection is considered by utilizing distribution factors, and extended superposition theorem is applied to indicate the change of line power flow with respect to the system topology variable. However, the resulting MILP OTS problem is still computationally challenging to solve.}

For the approximate methods, the approximation scheme in \cite{jabarnejad2018approximate} maintains the accuracy of the operational cost of the OTS formulation \cite{ots} while significantly reducing computational overhead. Recently, machine-learning-based approaches for OTS \edit{and} substation reconfiguration have also been explored \cite{kim2025dispatch,meng2025flow}. Under uncertainties of renewable energies, a two-stage robust problem is formulated in \cite{zhou2022distributionally} to limit constraint violations and is solved by \edit{approximation} method. To further reduce computational burden, a graph-based method is proposed in \cite{aguilar2025graph}, combining graph-theoretic bounds on voltage angle differences with an improved feasible-solution search. 

For substation reconfiguration (also referred to as bus split), it is taken into account for the optimal reconfiguration of power transmission systems in \cite{heidarifar2014optimal} to reduce congestion and meet $N\!-\!1$ security constraints. An improved node-breaker representation is given in \cite{heidarifar2015network}, using double bus-bar models where connections to the original buses are represented by binary variables. However, these works consider a limited number of switching actions and cannot represent actions for a real bus station. This limitation is addressed through formulations that permit each grid component to connect to either split bus bar \cite{hinneck2021optimal,bastianel2024optimal}; however, these approaches do not incorporate line switching flexibility. \edit{A shift factor framework is introduced in \cite{goldis2016shift} to model breakers in the power flow equations. The OPF problem is then formulated as an MILP, the computation of which becomes time consuming as the system size and number of buses to be split scales. Reference \cite{babaeinejadsarookolaee2021tractable} provides a sensitivity-based method to calculate line power flow with respect to the node breaker position, and the selected bus to be split is based on the sensitivity factor to the line power flow, while neglecting the relationship between bus split and the optimal cost change. This work led to the development of  ``line load reduction'' (LLR) heuristic method in \cite{babaeinejadsarookolaee2023transmission}, where the sum of squared line currents with weights and voltage regulation term are used in the objective function.}

Existing heuristic or \edit{approximation} methods that select line switching address computational issues but generally do not consider line switching and bus split simultaneously. In this paper, we propose a novel sensitivity-analysis-based heuristic algorithm for unified \edit{optimal transmission reconfiguration (OTR)} identification that includes both line switching and substation reconfiguration (bus split). The contributions of this paper are fivefold:
\begin{enumerate}
    \item We develop a novel approach to quantify the sensitivity of the optimal objective value of the \edit{OTR} problem to line switching, which differs from classical sensitivity methods in the literature, such as \cite{sensitivity}.
    \item We establish a formal connection between the proposed sensitivity analysis framework and heuristic metrics commonly used in \cite{sensitivity}, providing theoretical justification for some metrics and motivating improved alternatives.
    \item We demonstrate the relationship between bus split (i.e., substation reconfiguration) and generalized line switching, enabling a unified treatment of bus split and line switching in the sensitivity analysis framework.
    \item We develop novel heuristic algorithms for OTR solution identification based on the new sensitivity analysis framework.
    \item As a byproduct, we propose a significantly simpler derivation of the bus split distribution factor (BSDF)—a recently proposed factor quantifying the sensitivity of line flows to bus split \cite{van2023bus}.
\end{enumerate}

\section{Background}
Consider a power system modeled as a graph $\mathcal{G} = (\mathcal{V},\mathcal{E})$ where $\mathcal{V}$ is the vertex set and $\mathcal{E} \subseteq \mathcal{V}\times \mathcal{V}$ is the (undirected) edge set. We may assume the system has $n$ vertices (buses) and $m$ edges (lines), that is, $|\mathcal{V}| = n$ and $|\mathcal{E}| = m$. The susceptance matrix $B^{0} \in \mathbb{R}^{n\times n}$ is the weighted graph Laplacian that describes the connectivity and weights of the graph. The diagonal elements of $B^{0}$ are the sums of the negative line susceptances (which are positive for inductive lines) incident to the corresponding buses, and the off-diagonal entries are the line susceptances between the two buses corresponding to the row and column indices. Specifically, the elements of the susceptance matrix are
\begin{equation*}
    B^{0}_{ij} = \begin{cases}
        \sum_{k=1}^n b_{ik} & \text{if } i= j \\
        -b_{ij} &\text{if } i \ne j \text{ and } (i,j) \in \mathcal{E} \\
        0 &\text{otherwise}
    \end{cases}
\end{equation*}
where $b_{ij}$ is the negative line susceptance between buses $i$ and $j$. \edit{We find it helpful to further denote the bus-branch incidence matrix as $A^0 \in \{1,-1,0\}^{n\times m}$ and the diagonal matrix of negative line susceptance as $D\in\mathbb{R}^{m\times m}$. It follows that $B^0 = A^0 D (A^0)^\top$}.

\edit{The vector $\theta \in \mathbb{R}^n$ collects the vector of voltage phase angles.} Similarly, let $p^g, p^d \in \mathbb{R}^n_{\ge0}$ be the vectors of generation and load at \edit{all }buses\edit{, and $p \coloneqq p^g - p^d$ is the vector of nodal net injections.} We adopt the DC power flow model in this paper, in which the phase angles and nodal injections are linearly related as
\begin{equation}
    p = B^0 \theta.
\end{equation}

\edit{Let the vector of line flows be defined as $f \coloneqq D(A^0)^\top \theta$, we have 
\begin{equation} \label{eq:flow}
    f = \Psi p
\end{equation}
where the power transfer distribution factor (PTDF) matrix $\Psi$ is given by $\Psi = D(A^0)^\top (B^0)^\dagger$. The matrix $(B^0)^\dagger$ is the Moore-Penrose pseudoinverse of $B^0$.}

% By defining the vector of line flows as $f \coloneqq D(A^0)^\top\theta\textcolor{blue}{=H\theta}$, \textcolor{blue}{where $H\in \mathbb{R}^{m\times n}$ is the branch-by-node matrix}. Therefore, we have
% \begin{equation}
%     f = H (B^0)^{-1}p
% \end{equation}
% \textcolor{blue}{where $H (B^0)^{-1}$ is the PTDF matrix $\Psi$. However, $B^0$ is not invertible, the row and column corresponding to slack bus need to be removed. So reduced matrix $H_{red}\in \mathbb{R}^{m\times (n-1)}$ and $B^0_{red}\in \mathbb{R}^{(n-1)\times (n-1)}$ are introduced where the row and column related to slack bus are eliminated. Thus, PTDF matrix $\Psi\in \mathbb{R}^{m\times n}$ is calculated as:}\\
% \textcolor{blue}{\begin{equation}
%     \Psi_{l,n}=
%     \begin{cases}
%    (H_{red}(B^0_{red})^{-1})_{l,n} & \text{if }n \text{ is not slack bus}\\
%     0 & \text{otherwise} 
%     \end{cases}
% \end{equation}}

The (DC)OTS problem can be formulated as an MILP \cite{ots}, 
% \begin{align}
%     C = \text{min }&c^{T}P\\
%     \text{s.t. }&\theta^{min}_{n}\le \theta_{n}\le \theta^{max}_{n}\\
%     &P^{min}_{ng}\le P_{ng}\le P^{max}_{ng} \text{ for all $g$ and $n$ }\\
%     &P^{min}_{nk}z_{k}\le P_{nk}\le P^{max}_{nk}z_{k} \text{ for all $k$ and $n$ }\\
%     &-\displaystyle\sum_{k}P_{nk}-\displaystyle\sum_{g}P_{ng}-\displaystyle\sum_{d}P_{nd} = 0\\
%     &B_{k}(\theta_{n}-\theta_{m})-P_{nk}+(1-z_{k})M \ge 0\\
%     &B_{k}(\theta_{n}-\theta_{m})-P_{nk}-(1-z_{k})M \ge 0\\
%     &\displaystyle\sum_{k}(1-z_{k})=1
% \end{align}
which \edit{can be computationally demanding especially for real-time operation applications}. For simplicity, we assume \edit{linear cost curve $c^\top p^g$ for the problem}. Recently, bus split has been incorporated in addition to line switching \cite{zhou2021substation}. \edit{They assume that the newly-generated bus after bus split is connected with only one bus and the bus split can be equivalently regarded as retaining the topology but with some power transfer between the split bus and the connected one. Consider line $\ell=(i,j)$, the power transfer scenarios corresponding to split of bus $i$ include generation only, load only, and both. The following matrix is used to express the above cases: $\Delta_{l,i}=(e_i-e_j)\left[p^d_i,\, -p^g_i, \, p^d_i-p^g_i \right] \in \mathbb{R}^{n\times 3}$, where $e_i$ is the $i$-th standard basis in $\mathbb{R}^n$. Similarly, the three power transfer situations of split of bus $j$ can also be represented as: $\Delta_{l,j}=(e_j-e_i)\left[p^d_j, \, -p^g_j, \, p^d_j-p^g_j\right]$.} \edit{In this way, the OTR problem can be formulated as the following MILP problem:}
\begin{subequations} \label{eq:otr}
\begin{align}
    C = \min \quad &c^{\top}p^g\label{exact objective function}\\
    \text{over} \quad & \theta \in \mathbb{R}^n, p^g \in \mathbb{R}^{n},  f \in \mathbb{R}^m, z \in \{0, 1\}^m, \notag\\
    & w_{l,i}\in \{0, 1\}^{3}, w_{l,j}\in \{0, 1\}^{3}, \forall\, \ell = (i,j) \in \mathcal{E} \notag\\
    \text{s.t.} \quad & \ubar{\theta}_i \le \theta_i \le \bar{\theta}_i \\
    & \ubar{p}^g_i \le p^g_i \le \bar{p}^g_i\\
    &\ubar{f}_{\ell}z_{\ell} \le f_\ell \le \bar{f}_\ell z_\ell\\
    &b_{ij}(\theta_{i}-\theta_{j})-f_\ell+(1-z_\ell)M_\ell \ge 0  \notag\\
    &b_{ij}(\theta_{i}-\theta_{j})-f_\ell-(1-z_\ell)M_\ell \le 0 \label{big-M}\\
    &\sum_{\ell\in \mathcal{E}}(1-z_\ell)=1 \\
    &\mathbbold{1}^\top w_{\ell,i}+\mathbbold{1}^\top w_{\ell,j}\le 1-z_\ell \label{line open or bus split}\\
    & \sum_{\ell:i\in \mathcal{N}_{\ell}}\mathbbold{1}^\top w_{\ell,i}\le 1 \label{one power transfer}\\
    &\Psi f = p^g- p^d \notag\\
    &+\sum_{\ell=(i,j)}\Delta_{\ell,i}w_{\ell,i}+\displaystyle\sum_{\ell=(i,j)}\Delta_{\ell,j}w_{\ell,j} \label{new power balance}\\
    &\ubar{f}_\ell \mathbbold{1}\le \Delta_{\ell,i}w_{\ell,i}+\Delta_{\ell,j}w_{\ell,j}\le \bar{f}_\ell \mathbbold{1} \label{new line power flow limit}
\end{align}
\end{subequations}
where $\ubar{f}$ \edit{and $\bar{f}$} are the lower \edit{and upper} bounds of line power flow. The binary variables $z$ are introduced to represent the statuses of lines. When one line is open, the constraints are guaranteed by imposing \edit{big $M_\ell$ for all $\ell \in \mathcal{E}$} in constraint \eqref{big-M} for line $\ell$. $\mathcal{N}_\ell$ is the set of incident buses for line $\ell$. $w_{\ell,i}, w_{\ell,j}\edit{\in \mathbb{R}^3}$ are two vectors of binary variables \edit{with only one element being 1 }that are used to choose power transfer type when bus $i$ or $j$ is split. \edit{For example, Fig. \ref{fig:1} illustrates a situation where bus $i$ is split into $i$ and $i'$, and the load will connect with the newly-generated bus $i'$. Furthermore, the post-split system is shown in Fig. \ref{fig:2} with bus $i'$ and $j$ being merged. So, the load is connected with bus $j$. In this scenario, $w_{\ell,i}=[1, 0, 0]^\top$ and $w_{\ell,j}=[0,0,0]^\top$ with $\ell=(i,j)$. When $z_\ell=0$ and $\mathbbold{1}^\top w_{\ell,i}+\mathbbold{1}^\top w_{\ell,j}=0$, it is simply a line switching. If $z_\ell=0$ and $\mathbbold{1}^\top w_{\ell,i}+\mathbbold{1}^\top w_{\ell,j}=1$, one power transfer scenario is chosen after line switching, making it equivalent to bus split. Constraint \eqref{one power transfer} enforces only one power transfer case to occur. Constraints \eqref{new power balance} and \eqref{new line power flow limit} ensures the power balance after power transfer and the transferred power does not violate the limit of line $\ell=(i, j)$}. In contrast to this problem, the goal of the current paper is to propose an algorithm that can efficiently search for high-quality solutions of \edit{OTR} through sensitivity analysis. By doing so, we are able to pick high-quality switching actions one at a time in order to reduce the dispatch cost while maintaining high computational efficiency.

\begin{figure}[!t]
    \centering
    \includegraphics[width=0.8\linewidth]{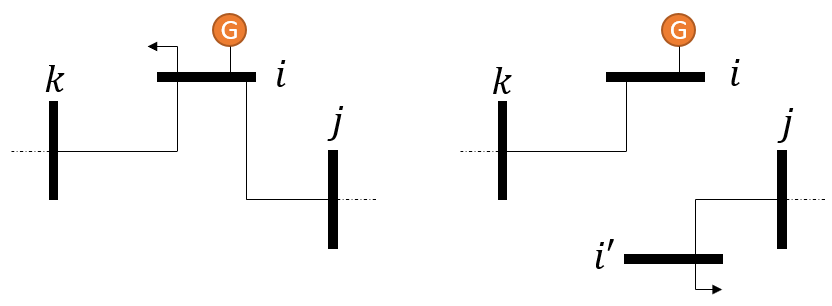}
    \caption{(Left) Original bus-branch model and (right) model with a bus split at bus $i$.}
    \label{fig:1}
\end{figure}

\begin{figure}[!t]
    \centering
    \includegraphics[width=0.8\linewidth]{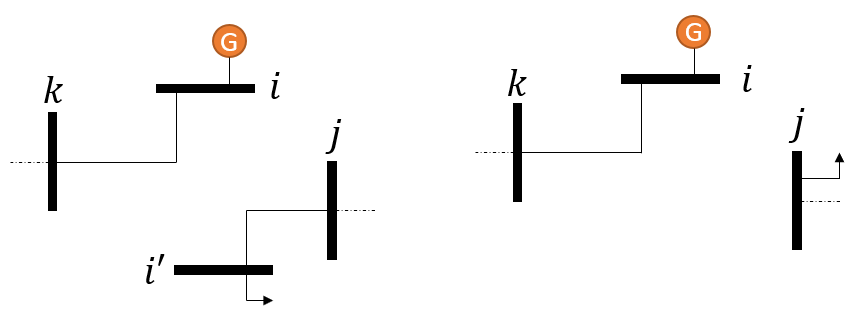}
    \caption{(Left) Post-split bus-branch model and (right) equivalent reduced bus-branch model.}
    \label{fig:2}
\end{figure}

\section{\edit{Sensitivities of OTR Optimal Cost to Parameter Changes}}

In this section, the sensitivity of the optimal objective value with respect to the switching action is derived. By modeling line switching as a discrete change in line susceptance, we can establish the sensitivity by characterizing the sensitivity of the optimal objective value of the OPF problem with respect to line susceptance, which can in turn be derived by applying the \edit{envelope theorem} \cite{Varian1992}. While sensitivity analysis for OTS has been studied previously \cite{sensitivity}, our results differ in both modeling and end result. By formulating the switching action as a change in line \edit{susceptance} rather than as bounds on line flow, the sensitivity metric we obtain aligns more closely with observed system behavior. \edit{In addition, we will also present the sensitivity of the optimal cost with respect to other parameters relevant in modeling bus split.}

\subsection{\edit{Line Switching Sensitivity and First-Order Cost Change}}

% The DC-OTS problem is given as
% \begin{align}
%     C = \text{min } &c^{T}P \label{cost function I}\\
%     \text{s.t. } &B^{0}\theta = P-L \label{power balance}\\
%     &\underline{f}_{i,j} \leq b_{i,j}(\theta_{i}-\theta_{j}) \leq \overline{f}_{i,j} \label{power flow limits}\\
%     &\underline{P} \leq P \leq \overline{P}\label{generation limits}
% \end{align}

When only a single switching action is allowed, the optimal solution can be found by enumerating all currently closed lines (for simplicity, we consider only line openings as actionable line operations). We consider the following DC-OPF problem parametrized by the susceptance matrix $B^0$:
\begin{subequations} \label{eq:OPF}
\begin{align} 
    v(B^0) = \text{min} \quad & c^\top p^g \label{eq:OPF:obj} \\
    \text{s.t.} \quad & B^{0}\theta = p^g - p^d && (\lambda) \label{eq:OPF:a} \\
    & \ubar{f}_{ij} \leq b_{ij}(\theta_{i}-\theta_{j}) \leq \bar{f}_{ij} && (\ubar{\mu}, \bar{\mu}) \label{eq:OPF:b} \\
    & \ubar{p}^g \leq p^g \leq \bar{p}^g && (\ubar{\nu}, \bar{\nu}) \label{eq:OPF:c}
\end{align}
\end{subequations}
where $v(\cdot)$ is the value function of problem \eqref{eq:OPF}; that is, it is the optimal value attained by \eqref{eq:OPF} when the network susceptance matrix is $B^0$. The dual variables are $\lambda \in \mathbb{R}^n$, $\ubar{\mu}, \bar{\mu} \in \mathbb{R}^m_{\ge 0}$, and $\ubar{\nu}, \bar{\nu} \in \mathbb{R}^n_{\ge 0}$. The Lagrangian of problem \eqref{eq:OPF} is given by
\begin{multline} \label{eq:Lagrangian}
    \mathcal{L}(p^g, \theta, \lambda, \ubar{\mu}, \bar{\mu}, \ubar{\nu}, \bar{\nu}; B^0) = c^\top p^g + \lambda^\top (B^{0}\theta - p^g + p^d) \\
    + \sum\nolimits_{(i,j) \in \mathcal{E}} \left( (\bar{\mu}_{ij} - \ubar{\mu}_{ij})  b_{ij}(\theta_i - \theta_j) + \ubar{\mu}_{ij}\ubar{f}_{ij} - \bar{\mu}_{ij}\bar{f}_{ij} \right) \\
    + (\bar{\nu} - \ubar{\nu})^\top p^g + \ubar{\nu}^\top\ubar{p}^g - \bar{\nu}^\top\bar{p}^g .
\end{multline}

We are interested in finding the sensitivity of $v(B^0)$ with respect to  $b_{ij}$ for some line $(i,j)\in\mathcal{E}$, which can be obtained using the envelope theorem \cite[Sect. 27.4]{Varian1992}. The theorem states that, given an optimal primal-dual pair to problem \eqref{eq:OPF} as $x^* \coloneqq ((p^g)^*, \theta^*, \lambda^*, \ubar{\mu}^*, \bar{\mu}^*, \ubar{\nu}^*, \bar{\nu}^*)$, and assuming the differentiability of $v(\cdot)$ in the neighborhood of $B^0$, we have
\begin{equation}
    \hspace{-0.02in} \frac{\partial v(B^0)}{\partial b_{ij}} = \frac{\partial \mathcal{L}(x^*; B^0)}{\partial b_{ij}} = (\bar{\mu}_{ij}^*-\ubar{\mu}_{ij}^* + \lambda_{i}^*-\lambda_{j}^*)(\theta_{i}^*-\theta_{j}^*). \label{derivative of optimal cost to susceptance}
\end{equation}

Therefore, the first-order effect of opening line $(i,j)$ on the optimal cost of \eqref{eq:OPF} is given by
\begin{equation}
    \Delta v_{ij} \coloneqq \frac{\partial v(B^0)}{\partial b_{ij}} (0-b_{ij})
    =-(\bar{\mu}_{ij}^* -\ubar{\mu}_{ij}^* + \lambda_{i}^*-\lambda_{j}^*) f_{ij}^*,\label{eq:dv}
\end{equation}
\edit{where $f_{ij}^* \coloneqq b_{ij}(\theta_i^*-\theta_j^*)$.}

\begin{rmk}
    We contrast the sensitivity result \eqref{eq:dv} with that given in \cite{sensitivity}. First, note that an alternative OPF formulation is used in \cite{sensitivity}, which is
    \begin{subequations} \label{eq:OPF2}
    \begin{align} 
        \text{min} \quad & c^\top p^g \label{eq:OPF2:obj} \\
        \text{s.t.} \quad & \mathbbold{1}^\top (p^g - p^d) = 0 \label{eq:OPF2:a} \\
        & \ubar{f} \leq \Psi (p^g - p^d) \leq \bar{f} \label{eq:OPF2:b} \\
        & \ubar{p}^g \leq p^g \leq \bar{p}^g
        \label{eq:OPF2:c}
    \end{align}
    \end{subequations}
    \edit{where $\Psi$ is the $\mathrm{PTDF}$ matrix in \eqref{eq:flow}}.
    
    It is a standard result that the two formulations \eqref{eq:OPF} and \eqref{eq:OPF2} are equivalent in the sense that $(p^g)^*$ is optimal for \eqref{eq:OPF} if and only if it is optimal for \eqref{eq:OPF2}. To see this, notice that $\theta$ in \eqref{eq:OPF:a} is given by $\theta = (B^0)^\dagger (p^g - p^d) + a\mathbbold{1}$ for any $a \in \mathbb{R}$. Reformulating \eqref{eq:OPF:b} in matrix form leads to $\ubar{f} \le D\edit{(A^0)}^\top \theta \le \bar{f}$. Replacing  $\theta$ by $(B^0)^\dagger (p^g - p^d) + a\mathbbold{1}$ in the line flow expression and noting $\mathbbold{1} \in \ker(\edit{(A^0)}^\top)$, we see \eqref{eq:OPF:b} is equivalent to \eqref{eq:OPF2:b}.

    Therefore, we may adapt the sensitivity result in \cite{sensitivity} using the OPF model \eqref{eq:OPF}, which leads to the optimal cost sensitivity metric in \cite{sensitivity} as
    \begin{equation} \label{eq:sensitivity:Ruiz}
        \delta(v) = -(\bar{\mu}_{ij}^* -\ubar{\mu}_{ij}^*) f_{ij}^*.
    \end{equation}

    Comparing \eqref{eq:sensitivity:Ruiz} with the proposed one in \eqref{eq:dv}, we see the difference is the missing $\lambda_{i}^*-\lambda_{j}^*$ term in \eqref{eq:sensitivity:Ruiz}. The advantage of the proposed sensitivity metric will be justified through extensive computational experiments. \qed
\end{rmk}

\subsection{\edit{Sensitivities of Other Parameters}} \label{sect:sensitivity_bs}

\edit{Sensitivities of the optimal cost to other parameters can be derived similarly. In particular, we consider the sensitivities of the optimal cost $v(B^0)$ with respect to relevant parameters under power transfer and line close.}

\edit{First, a power transfer from bus $h$ to $k$ can be modeled as a decrease in $p^d_h$ and an increase in $p^d_k$, whose sensitivity is
\begin{equation}
    \frac{\partial \mathcal{L}(x^*;B^0)}{\partial p^d_k} -\frac{\partial \mathcal{L}(x^*;B^0)}{\partial p^d_h} = \lambda_k^* - \lambda_h^*.
\end{equation}}

\edit{Second, adding a (fictitious) line changes the susceptance matrix only, so the sensitivity of the optimal cost with respect to adding a line between buses $h$ and $k$ is
\begin{equation}
    \frac{\partial \mathcal{L}(x^*;B^0)}{\partial b_{hk}} = (\lambda_h^*-\lambda_k^*)(\theta_h^*-\theta_k^*).
\end{equation}
These sensitivities will be used to estimate the effect of bus split in Section \ref{sect:bsdf}.}

\section{Incorporation of Bus Split and Bus Split Distribution Factor} \label{sect:bsdf}

To incorporate bus split into the sensitivity metric above, we discuss how bus split can be equivalently viewed as a generalized line switching. As a byproduct, we also provide an expression for \edit{the} \emph{bus split distribution factor} ($\mathrm{BSDF}$), a distribution factor quantifying the effect of bus split akin to $\mathrm{PTDF}$ and $\mathrm{LODF}$. 

For ease of exposition we may assume the bus to be split into two busbars is \edit{the last bus (bus $n$)}. The busbars after the split are therefore buses $n$ and $n+1$. Without loss of generality we further assume that the buses are numbered such that after busbar split, bus $n$ is adjacent to buses numbered between $1$ and $m$, and bus $n+1$ is adjacent to buses from $m+1$ to $n-1$. \edit{There are cases where there are parallel lines between buses and we assume that if there are multiple lines between bus $n$ and bus $k$, where $m+1\le k\le n-1$, then after split, all these lines will only connect bus $n+1$ and bus $k$.} We also denote the power injection that busbar $n+1$ carries by $p_{n+1}$ so that bus $n$, after the bus split, has power injection $p_n - p_{n+1}$.

% \textcolor{red}{REMOVE IN THE END}
% \vspace{2in}

After bus split, the DC power flow equations change from $p = B^{0}\mathrm{\theta}$ to \eqref{eq:DCpost}:

\begin{strip}
% \begin{figure*}
\begin{equation} \label{eq:DCpost}
    \begin{bmatrix}
        p_1 \\ \phantom{p} \\ \phantom{p} \\  \vdots \\ \phantom{p} \\ \phantom{p} \\ p_{n-1} \\ p_n - p_{n+1} \\ p_{n+1}
    \end{bmatrix} = 
    \begin{bmatrix}
        \sum_{k=1}^n b_{1k} & & \cdots & & & -b_{1(n-1)} & -b_{1n} & 0 \\
        & & & & & & \vdots & \vdots \\
        \vdots & & \ddots & & & \vdots & -b_{mn} & 0 \\
        & & & & & & 0 & -b_{(m+1)n} \\
        & & & & & & \vdots & \vdots \\
        -b_{(n-1)1} & & \cdots & & & \sum_{k=1}^n b_{(n-1)k} & 0 & -b_{(n-1)n} \\
        -b_{n1} & \cdots & -b_{nm} & 0 & \cdots & 0 & \sum_{k=1}^m b_{nk} & 0 \\
        0 & \cdots & 0 & -b_{n(m+1)} & \cdots & -b_{n(n-1)} & 0 & \sum_{k=m+1}^{n-1} b_{nk}
    \end{bmatrix}
    \begin{bmatrix}
        \theta_1' \\ \phantom{p} \\ \phantom{p} \\  \vdots \\ \phantom{p} \\ \phantom{p} \\ \theta_{n-1}' \\ \theta_n' \\ \theta_{n+1}'
    \end{bmatrix}
\end{equation}
\end{strip}
% \end{figure*}
We notice that the $(n-1)$-th leading principal matrix of the above susceptance matrix is identical to that of $B^0$, while for the $n$-th row and column, the $(m+1)$ to $(n-1)$-th elements shift downward and rightward to the $(n+1)$-th row and column, respectively.

We now perform Kron reduction to eliminate bus $n+1$ for the above system \eqref{eq:DCpost}, which amounts to solving for $\theta_{n+1}'$ using the last row and back substituting it for the remaining rows. Specifically, if we denote $\sum_{k=m+1}^{n-1} b_{nk}$ by $\Sigma$, we can solve for $\theta_{n+1}'$ and represent it as
\begin{equation}\label{theta_n+1}
    \theta_{n+1}' = \frac{p_{n+1}}{\Sigma} + \frac{\sum_{k=m+1}^{n-1} b_{nk}\theta_k'}{\Sigma}. 
\end{equation}
Back substituting $\theta_{n+1}'$ into the first $n$ rows and rearranging the resulting DC power flow model, we have
\begin{equation} \label{eq:newDC}
    \tilde{p}(\theta') = B^0 \theta',
\end{equation}
where $\tilde{p}(\theta') \in \mathbb{R}^n$ is the equivalent injection after bus splitting parametrized by the phase angle vector $\theta'$. The implication about \eqref{eq:newDC} is that bus split can be modeled by changes in bus power injections while keeping the network model intact. 

\subsection{Bus Split Distribution Factor}

The equivalent injection $\tilde{p}(\theta')$ is given by the following expression:
\begin{equation} \label{eq:ptilde}
    \hspace{-0.1in}\tilde{p}_i(\theta') = 
    \begin{cases}
        p_i & \text{if } i \le m \\
        p_i + \frac{b_{ni}}{\Sigma}p_{(n+1)n}(\theta') & \text{if } m+1 \le i \le n-1 \\
        p_n - p_{(n+1)n}(\theta') & \text{if } i = n
    \end{cases}
\end{equation}
where $p_{(n+1)n}(\theta') := p_{n+1} - \sum_{k=m+1}^{n-1} b_{nk} (\theta_n' - \theta_k')$. The notation $p_{(n+1)n}(\theta')$ is indicative of its physical meaning: it is the line flow through the fictitious, zero impedance line between buses $n+1$ and $n$ \emph{before} bus split.

The physical interpretation of the model \eqref{eq:newDC} is now clear: it describes the fact that the bus split is equivalent to power transfers between buses that are adjacent to bus $n+1$ after bus split and bus $n$, the amount of which is proportional to the absolute susceptance between the two buses, and the sum of which is equal to the flow through the fictitious line between buses $n+1$ and $n$ as a result of the equivalent transfers.

The physical interpretation above enables us to find the equivalent transfer $p_{(n+1)n}(\theta')$. First, note that by KCL, the sum of flow between $n+1$ and $n$ and the injection at busbar $n$ (which is $p_n - p_{n+1}$) is equal to the sum of line flows between $n$ and $k$ for $1 \le k \le m$. That is, 
\begin{equation}
    p_{(n+1)n}(\theta') + p_n - p_{n+1} = \sum_{k=1}^m p_{nk}(\theta').
\end{equation}
On the other hand, the additional power flow through the fictitious line $(n+1,n)$ due to the equivalent power transfers can be represented using the \emph{power transfer distribution factors} ($\mathrm{PTDF}$s). In particular, we denote the additional flow on line $(i, j)$ due to per unit transfer from bus $k$ to bus $m$ by $\mathrm{PTDF}_{ij,km}$. It therefore follows that
\begin{multline}
    p_{(n+1)n}^0 + \frac{\sum_{j=1}^{m} \sum_{k=m+1}^{n-1} \mathrm{PTDF}_{nj, kn} b_{nk}}{\Sigma} p_{(n+1)n}(\theta') \\
    + p_n - p_{n+1} = p_{(n+1)n}(\theta') + p_n - p_{n+1},
\end{multline}
where $p_{(n+1)n}^0$ is the original fictitious line flow on line $(n+1,n)$ before bus split. Solving for $p_{(n+1)n}(\theta')$ leads to
\begin{equation} \label{eq:fictitiousp}
    p_{(n+1)n}(\theta') = \frac{p_{(n+1)n}^0 \Sigma}{\Sigma - \sum_{j=1}^{m} \sum_{k=m+1}^{n-1} \mathrm{PTDF}_{nj, kn} b_{nk}}.
\end{equation}

We are now in a position to define $\mathrm{BSDF}$. According to \cite{van2023bus}, the $\mathrm{BSDF}$ is defined as the ratio between the change of line flow due to bus split and the fictitious flow on line $(n+1,n)$ before bus split. Leveraging \eqref{eq:fictitiousp}, we arrive at the following definition of $\mathrm{BSDF}$ for line $(i,j)$ due to the split of bus $n$:
\begin{equation}
    \mathrm{BSDF}_{ij,n} = \frac{ \sum_{k=m+1}^{n-1} \mathrm{PTDF}_{ij, kn} b_{nk}}{\Sigma - \sum_{j=1}^{m} \sum_{k=m+1}^{n-1} \mathrm{PTDF}_{nj, kn} b_{nk}}.
\end{equation}
It can be shown that the definition above is consistent with that in \cite{van2023bus}, but the expression is cleaner and the derivation provides clear physical intuition.

% The equivalent effect of bus split on $B^0$ is encoded in \eqref{eq:ptilde}. For example, the negative line susceptance $b_{kn}$ for line $(k,n)$ reduces by $b_{kn}^2/\Sigma$ for all $m+1 \le k \le n-1$. In addition, the power injection at bus $n$ shifts to buses $m+1$ to $n-1$. For brevity, the detailed expressions on the change of $B^0$ and $p$ will not be given. However, we note that the sensitivity can be similarly derived using envelope theorem since both changes affect system parameters only.

\subsection{Relationship Between Bus Split and Line Switching} \label{sect:bs-ls}

\edit{There is a more standard interpretation of bus split based on network analysis. Instead of retaining the original susceptance matrix $B^0$ as in \eqref{eq:newDC}, we can derive an alternative form of the Kron reduced post-split model leveraging \eqref{theta_n+1} as
\begin{equation}
    p + \Delta p = (B^0 + \Delta B) \theta',
\end{equation}
where $\Delta p$ is
\begin{equation} \label{eq:dp}
    \Delta p_i = 
    \begin{cases}
        0, & \text{if } i \le m \\
        \frac{b_{ni}}{\Sigma}p_{n+1}, & \text{if } m+1 \le i \le n-1 \\
        - p_{n+1} & \text{if } i = n
    \end{cases},
\end{equation}
and $\Delta B$ is
\begin{equation}
    \Delta B_{ij} =
    \begin{cases}
        0, & \text{if } i \le m \text{ or } j \le m \\
        \frac{-b_{in}b_{nj}}{\Sigma}, & \text{if } m + 1 \le i, j \le n-1 \\
        b_{ij}, & \text{if } m + 1 \le i, j \le n, i \text{ xor } j = n \\
        -\Sigma, & \text{if } i=j=n
    \end{cases}
\end{equation}
Note that the $\Delta B$ matrix has the desirable graph Laplacian structure, meaning the diagonal elements are exactly the negative sums of the corresponding off-diagonal ones. Therefore, these perturbation terms model the effect of changing the susceptances of the corresponding lines.}

\edit{The above model indicates that bus split can be equivalently modeled by the following three operations:
\begin{enumerate}
    \item Power injections and withdrawals at bus $n$ and $k$ where $m+1\le k \le n-1$.
    \item Opening lines incident to bus $n+1$.
    \item Connect buses that are adjacent to bus $n+1$, or increase the line susceptances if they are already connected.
\end{enumerate}
Therefore, bus split can be viewed as a combination of line opening, power transfer, and line closing/reinforcement.}

\subsection{First-Order Cost Change Due to Bus Split}

\edit{With the equivalent bus split model in place, we now present the estimated change in the optimal cost of the OPF problem in \eqref{eq:OPF}. According to the discussion in Section \ref{sect:sensitivity_bs}, the optimal cost change due to power transfers can be estimated as:
\begin{align}
     \Delta v_n^1 &=p_{n+1}\lambda_n^* - \sum_{k=m+1}^{n-1} (\frac{b_{kn}}{\Sigma} p_{n+1})\lambda_k^*.
\end{align}
Second, the optimal cost change due to openings of lines that connect with bus $n+1$ can be estimated as:
\begin{align}
    \Delta v_n^2 &= \sum_{k=m+1}^{n-1}-(\bar{\mu}_{nk}^* -\ubar{\mu}_{nk}^* + \lambda_{n}^*-\lambda_{k}^*) f_{nk}^*. \label{bus split cost change}
\end{align}
Third, the estimated optimal cost change due to line connections/reinforcement among buses adjacent to bus $n+1$ is:
\begin{equation}
    \Delta v_n^3 = \sum_{m < i \ne j < n} (\tilde{\mu}_{ij}^* -\utilde{\mu}_{ij}^* + \lambda_{i}^*-\lambda_{j}^*) (\theta_i^* - \theta_j^*) \frac{b_{in}b_{jn}}{\Sigma},
\end{equation}
where $\tilde{\mu}_{ij}^* = \bar{\mu}_{ij}^*$ if $i$ is adjacent to $j$, $\tilde{\mu}_{ij}^* = 0$ otherwise, $\tilde{\mu}_{ij}^*$ is defined similarly.}
% One question is each line can possibly connect with one of the bus after split and here is how we define which lines are connected or not connected with bus $n$ after split: the sensitivity factor of each line can be obtained at first and we will let one line that has the least negative sensitivity value not connect with bus $n$ after split, together with power transfer, which in total is equivalent to bus split. We assume only load power transfer will happen. These are not difficult to achieve, since the connection status among the buses and the power transfer type can be both controlled through switches. 

\edit{Overall, the total estimated optimal cost change due to bus split is calculated as:}
\begin{equation}
    \edit{\Delta v_n = \Delta v_n^1+\Delta v_n^2+\Delta v_n^3.}
    \label{total bus split cost change}
\end{equation}

\section{\edit{One-Step Pivot-Based} Improved Heuristic}

\edit{The sensitivity factors developed in the previous section enable a fast preliminary ranking of candidate line-switching and bus-splitting actions. However, since these factors are based on first-order information, they may not fully capture the change in the optimal cost due to a discrete topology change. An alternative is to resolve the DCOPF \eqref{eq:OPF} for each candidate topology, but this leads to a significant increase in computation time. To strike a balance between accuracy and computational efficiency, we refine the sensitivity-based ranking process by analyzing the effect of each candidate action on the current optimal basis. Specifically, a one-step simplex pivot is applied, which provides a better estimate of the optimal cost change.}

% Consider the optimization problem \eqref{eq:OPF}, \edit{we first estimate the change in the optimal cost associated with each candidate line opening and bus split based on the sensitivity analysis developed in the previous section. Starting from this sensitivity-based estimate, we further refine it by evaluating each candidate through a one-step pivot of the simplex method, which yields a more accurate estimate of the resulting cost change.} In this way, candidate actions can be ranked according to their \edit{estimated cost reduction in an efficient way.}

We \edit{start by considering problem} \eqref{eq:OPF} in standard form:
\begin{subequations} \label{eq:standard}
\begin{align} 
    v = \text{min } &c{^\top}x\label{cost function III}\\
    \text{s.t. } &Ax=b\label{constraints}\\
    & x \ge 0. \label{variables}
\end{align}    
\end{subequations}
\edit{We recall the following standard terminology:} if there is $x_{B}\in \mathbb{R}^{n_B}$ which is a sub-vector of $x$ (where $n_B$ is the number of rows of $A$) such that $[A_{i_{1}}, \dots, A_{i_{n_B}}]\in \mathbb{R}^{n_B\times n_B}$ is a full rank matrix, \edit{where $A_{i_k}$ is the $i_k$-th column of $A$ with $k=1, \dots, n_B$,} then $x_{B}$ is called a basic variable vector \edit{and this full rank matrix is called a basis}. \edit{In this way, $x$ can be partitioned as $\left[\begin{smallmatrix} x_B \\ x_N \end{smallmatrix}\right]$ where $x_N$ are called non-basic variables, with $N$ being the index set of the non-basic variables.} 
% Let $N$ be the non-basic variables index set, \edit{and the optimal solution $x^*$ includes $x_B^*$ and $x_N^*$. Therefore, basis $B$ and the remaining part of $A$ which is called non-basis can both be obtained.}

% First of all, the optimization problem should be formulated in standard form. However, $\theta$ is free, so $\theta = \theta^{+}-\theta^{-}$ is introduced, where $\theta^{+}, \theta^{-} \in \mathbb{R}^n_{\ge 0}$. 

To convert \eqref{eq:OPF} into standard form, we decompose the unrestricted voltage angle variables as the difference between two nonnegative vectors $\theta = \theta^{+}-\theta^{-}$, where $\theta^{+}, \theta^{-} \in \mathbb{R}^n_{\ge 0}$. \edit{We also introduce nonnegative slack variables $\bar{f}^{+}, \ubar{f}^{+}, \bar{p}^{+}$, and $\ubar{p}^{+}$ for the upper and lower line flow and generation limits to convert the corresponding inequality constraints into equality ones. Let $e \coloneqq [1, {-1}]^\top$ and $L \coloneqq ee^\top = \left[\begin{smallmatrix} 1 & -1 \\ -1 & 1 \end{smallmatrix}\right]$, the standard form constraint matrix $A$ and the right-hand-side vector $b$ can be represented compactly as
\begin{equation}
    A = \begin{bmatrix}
        I & \begin{bmatrix}-B^0 & B^0\end{bmatrix} & \mathbbold{0} \\[2mm]
        \begin{bmatrix}\mathbbold{0}\\ e\otimes I\end{bmatrix} & \begin{bmatrix} L\otimes D(A^0)^\top\\\mathbbold{0} \end{bmatrix} & I
    \end{bmatrix}, \;\;\; 
    b = \begin{bmatrix} p^d \\ \bar{f} \\ -\ubar{f} \\ \bar{p}^g \\ -\ubar{p}^g \end{bmatrix},
\end{equation}
where $I$ and $\mathbbold{0}$ are identity and zero matrices of appropriate dimensions, respectively, and $\otimes$ denotes Kronecker product. In addition, the decision vector in standard form is $x = (p^g, \theta^+, \theta^-, \bar{f}^+, \ubar{f}^+, \bar{p}^+, \ubar{p}^+)$. }

% The coefficient matrix is
%  \begin{align}
%     A = [A_{1}|A_{2}]
% \end{align}
% where 
% \begin{align}
%     A_{1} = \begin{bmatrix}
%     I_{n\times n}&-B^{0}&B^{0}\\
%     O_{m\times n}&H_{m\times n}&-H_{m\times n}\\
%     O_{m\times n}&-H_{m\times n}&H_{m\times n}\\
%     I_{n\times n}&O_{n\times n}&O_{n\times n}\\
%     -I_{n\times n}&O_{n\times n}&O_{n\times n}\\
%     \end{bmatrix}
% \end{align}

% \begin{align}
%     A_{2} = \begin{bmatrix}
%     O_{n\times (2n+2m)}\\
%     I_{(2n+2e)\times (2n+2m)}\\
%     \end{bmatrix}
% \end{align}
% and the constant vector is
% \begin{align}
%     b = \begin{bmatrix}
%     p^d\\
%     \bar{f}\\
%     -\ubar{f}\\
%     \bar{p}^g\\
%     -\ubar{p}^g\\
%     \end{bmatrix}
% \end{align}
% where $x=[p^g; \theta^{+}; \theta^{-}; w]$, $p^g\in \mathbb{R}^{n}, \theta^{+}\in \mathbb{R}^{n}, \theta^{-}\in \mathbb{R}^{n}, w\in \mathbb{R}^{(2n+2m)}$.

% Before the derivation of the improved heuristic algorithm, the following is needed:

\edit{We consider both line opening and bus splitting in the heuristic algorithm. For bus split, we will restrict our discussion to the case where the new bus $n+1$ is connected to only one bus after bus split. The results can be easily extended to the more general cases, but they are not considered here due to space limitation. While line opening only modifies the constraint matrix $A$ and leaves the RHS vector $b$ unchanged, bus splitting under the restricted model amounts to a line switching together with an equivalent power transfer, and therefore modifies both $A$ and $b$.} The feasibility and optimality of $x^*$ might not apply to the updated optimization problem. Different computational implications can be drawn depending on how the model is updated.

% \edit{Given the optimal basis of the original DCOPF \eqref{eq:OPF}, a candidate topology control action changes the constraint matrix $A$. Different computational implications can be drawn depending on which columns are updated. We first consider the case when the control action only affects the nonbasic columns.} 

\edit{We first consider the case where only the nonbasic columns are affected by the control action. In this case, the reduced-cost tests are only needed for the modified columns, as the next proposition shows.}

\begin{Proposition} \label{thm:redcostcheck_nonbasic}
Suppose a candidate topology control action modifies only nonbasic columns of the constraint matrix $A$ in \eqref{eq:standard}. Then, to verify whether the current basis remains optimal for the updated problem, it suffices to check the reduced-cost conditions only for the modified nonbasic columns.
\end{Proposition}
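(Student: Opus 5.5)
The argument rests on the standard simplex optimality certificate for \eqref{eq:standard}. Fix the current optimal basis, with basis matrix $A_B$ (the columns of $A$ indexed by the basic variables) and nonbasic columns $A_j$, $j\in N$. The basis is optimal when two conditions hold. The first is primal feasibility, $x_B = A_B^{-1}b \ge 0$ with $x_N=0$. The second is dual feasibility: the simplex multipliers $y^\top = c_B^\top A_B^{-1}$ must give nonnegative reduced costs $\bar c_j = c_j - y^\top A_j$ for all $j\in N$. Under these two conditions the pair $(x,y)$ satisfies complementary slackness and is an optimal primal-dual pair. The plan is to track which of these quantities can change when only nonbasic columns of $A$ are modified, and to show that everything except the reduced costs of the modified columns stays the same.

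Let the updated matrix be $A'$, with $A'_j = A_j$ for every $j\notin M$, where $M\subseteq N$ is the set of modified columns. The cost vector $c$ is unchanged, since a topology action changes only network data and not $c^\top p^g$.

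\textbf{Basis matrix.} Every basic column is untouched, so $A'_B = A_B$. The same index set is therefore still a basis (full rank), and $A_B'^{-1} = A_B^{-1}$.

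\textbf{Primal feasibility.} We are in the setting where $b$ is unchanged, as for a line opening. Then $x_B = A_B^{-1}b \ge 0$ is the same vector, and the basic solution stays feasible. We also check that $A'x = A'_B x_B + A'_N x_N = A_B x_B = b$, because $x_N = 0$ and so the modified nonbasic columns are multiplied by zero.

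\textbf{Multipliers.} Since $c_B$ and $A_B$ are both unchanged, $y^\top = c_B^\top A_B^{-1}$ is unchanged.

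\textbf{Reduced costs.} For $j\in N\setminus M$, both $c_j$ and $A'_j = A_j$ are unchanged, and so is $y$. Hence $\bar c_j' = \bar c_j \ge 0$ carries over from the optimality of the original basis. For $j\in M$, the quantity $\bar c_j' = c_j - y^\top A'_j$ is genuinely new and must be checked.

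The conclusion follows from the simplex optimality theorem. The basis is optimal for the updated problem if and only if $\bar c_j' \ge 0$ for all $j\in M$, given that primal feasibility and the remaining reduced-cost signs are inherited. The "if" direction is what the proposition claims. The "only if" direction is not needed, and under degeneracy it can fail as a statement about optimality of the point rather than of the basis, so I will phrase the result only as sufficiency of the check.

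The main obstacle is not algebraic but lies in stating the hypotheses precisely. The argument needs $b$ unchanged, or at least $A_B^{-1}b' \ge 0$ (for bus split, $b$ changes through the power transfer, so primal feasibility must be rechecked separately). It also needs the basis index set to be defined on the original columns, so that "only nonbasic columns change" makes sense after the action. I will make both explicit and remark that for a line opening $b$ is unchanged, so the reduction to the columns in $M$ holds exactly.
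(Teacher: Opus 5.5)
Your proposal is correct and follows the paper's argument. Because the basic columns are untouched, $B^{-1}$ and $c_B^\top B^{-1}$ are unchanged, so every unmodified nonbasic column keeps its original nonnegative reduced cost and only the modified columns need rechecking. Your explicit point that primal feasibility carries over only when $b$ is unchanged, and must be rechecked for bus split, is a hypothesis the paper's proof leaves implicit; the paper addresses it only in the discussion after the propositions.
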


\begin{proof}
\edit{For the updated problem, the reduced cost of nonbasic variable $j\in N$ is
\begin{equation}
    \bar c_j' = c_j - c_B^\top B^{-1}A_j'.
\end{equation}
If column $j$ is unchanged, then $A_j'=A_j$, and hence
\begin{equation}
    \bar c_j' = c_j - c_B^\top B^{-1}A_j = \bar c_j \ge 0,
\end{equation}
since the original basis is optimal. Therefore, only the reduced costs corresponding to modified nonbasic columns need to be reevaluated.}
\end{proof}

\edit{In the proposed formulation, opening a line modifies exactly four columns of matrix $A$, due to the updates of the relevant terms in $B^0$ and the associated diagonal entries of $D$. The bus split is equivalent to removing the single connection between the new bus and its only adjacent bus after the split, and thus also leads to the modification of exactly four columns of $A$. Therefore, only four reduced-cost conditions need to be checked in either condition.}

\edit{However, when one of the basic columns is modified, the optimality check becomes more complicated. The fact is stated formally in the next proposition.}

% \begin{Proposition}
% When all changes occur in the non-basic columns, for the optimality condition verification, which is $c_{j}-c_{B}^{\top}B^{-1}A_{j}' \ge 0$, for $j \in N$, four formulas for line switching or bus split need to be checked.
% \end{Proposition}

% \begin{proof} 
% Before the change of line susceptance, an optimal solution is obtained, which is equivalent to $c_{j}-c_{B}^{\top}B^{-1}A_{j} \ge 0$. If the changes only influence non-basic columns, it is known that only $B^{0}$ and $H$ in $A$ will be affected, which correspond to four columns; they are all non-basic columns. The remaining non-basic columns will stay the same, ensuring their reduced costs are the same, which are non-negative. While the reduced costs of the four changed columns will change, so only four reduced costs or formula need to be checked. 
% \end{proof}

% \begin{Proposition}
% When there are changes in the basic columns, all optimality condition verification is required, which is $c_{j}-c_{B}^{\top}B'^{-1}A_{j} \ge 0$, for $j \in N$.
% \end{Proposition}

\begin{Proposition} \label{thm:redcostcheck_basic}
    Suppose a candidate topology control action modifies at least one basic column of the constraint matrix $A$ in \eqref{eq:standard}. Then, in order to verify whether the current basis remains optimal for the updated problem, all reduced-cost conditions need to be reevaluated.
\end{Proposition}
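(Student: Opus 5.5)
The plan mirrors the proof of Proposition~\ref{thm:redcostcheck_nonbasic}, but now follows how the basis matrix itself changes. Write $B$ for the current optimal basis and $c_B$ for the associated cost subvector. Suppose the control action modifies at least one basic column, say $A_{i_k}$ becomes $A_{i_k}'$. Then the updated basis matrix is
\begin{equation*}
    B' = B + (A_{i_k}' - A_{i_k}) e_k^\top + \cdots,
\end{equation*}
one rank-one term per modified basic column. The updated reduced cost of every nonbasic index $j\in N$ is
\begin{equation*}
    \bar c_j' = c_j - c_B^\top (B')^{-1} A_j' .
\end{equation*}
Denote the dual vector by $y^\top \coloneqq c_B^\top B^{-1}$ and its updated counterpart by $y'^\top \coloneqq c_B^\top (B')^{-1}$. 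I would first make precise that the updated reduced costs depend on $j$ only through the common vector $y'$. The key observation is that $y'$ is a single vector shared by all columns.

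The second step shows that $y'\ne y$ in general. For this I would use the Sherman--Morrison formula: for a single modified basic column with $u \coloneqq A_{i_k}'-A_{i_k}$,
\begin{equation*}
    (B')^{-1} = B^{-1} - \frac{B^{-1} u\, e_k^\top B^{-1}}{1+e_k^\top B^{-1}u},
\end{equation*}
provided $B'$ is nonsingular. This gives
\begin{equation*}
    y'^\top = y^\top - \frac{(y^\top u)\, e_k^\top B^{-1}}{1+e_k^\top B^{-1}u}.
\end{equation*}
The correction is a multiple of the $k$-th row of $B^{-1}$, which is generically a dense vector. For an unchanged nonbasic column $A_j'=A_j$, the updated reduced cost is therefore
\begin{equation*}
    \bar c_j' = \bar c_j + \alpha\, e_k^\top B^{-1}A_j,
\end{equation*}
where $\alpha$ is the scalar coefficient above. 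The sign of $\bar c_j'$ then cannot be inferred from $\bar c_j\ge 0$ alone, so every nonbasic index must be rechecked. Nonsingularity of $B'$ also has to be verified, and if $B'$ is singular the basis is not even defined. I would add that primal feasibility, $(B')^{-1}b\ge0$, must be rechecked as well.

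To make ``need to be reevaluated'' rigorous rather than merely generic, the third step constructs, or argues for the existence of, a data instance for the OPF structure of $A$ in \eqref{eq:standard}. In that instance an unmodified nonbasic column flips sign. For example, take $\theta^\pm$ columns in the basis whose $B^0$ entries change when a line is opened. Then pick a generator slack column whose entry $e_k^\top B^{-1}A_j$ is nonzero and whose $\bar c_j$ is small. Opening a line with a nonzero congestion-related $y^\top u$ then makes $\bar c_j'<0$.

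The main obstacle is this last step. The algebraic argument only shows that the unmodified reduced costs \emph{can} change, so the fallback claim is the weaker, standard one. That claim is: no column-wise certificate exists, because $y'$ differs from $y$ by a vector depending on all of $B^{-1}$. Hence the check is not local to the modified columns, in contrast to Proposition~\ref{thm:redcostcheck_nonbasic}.
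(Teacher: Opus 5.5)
Your argument is essentially the paper's: once a basic column changes, the shared multiplier vector $c_B^\top (B')^{-1}$ changes, so even an unmodified nonbasic column has $\bar c_j' = c_j - c_B^\top (B')^{-1}A_j$, which is generally different from $\bar c_j$; your Sherman--Morrison expression $\bar c_j' = \bar c_j + \alpha\, e_k^\top B^{-1}A_j$ makes that ``generally different'' explicit. The paper's proof stops at exactly your fallback claim and never builds a sign-flipping instance, so your unexecuted third step is an optional strengthening rather than something the proof needs.
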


\begin{proof}
    \edit{In the original problem, optimality of the basis implies
    \begin{equation}
        \bar{c}_j = c_j - c_B^\top B^{-1} A_j \ge 0, \quad \forall j \in N.
    \end{equation}
    Now suppose the topology action modifies at least one basic column, so that the basis matrix changes from $B$ to $B'$. Then, for the updated problem, the reduced cost of nonbasic variable $j\in N$ is given by
    \begin{equation}
        \bar c_j' = c_j - c_B^\top (B')^{-1}A_j'.
    \end{equation}
    Even if a nonbasic column remains unchanged, i.e., $A_j' = A_j$, we still have
    \begin{equation}
        \bar c_j' = c_j - c_B^\top (B')^{-1}A_j,
    \end{equation}
    which is generally different from
    \begin{equation}
        \bar c_j = c_j - c_B^\top B^{-1}A_j.
    \end{equation}
    Therefore, once the basis matrix changes, the reduced-cost conditions associated with all nonbasic variables may change. Hence, all reduced-cost conditions must be reevaluated.}
\end{proof}

% \begin{proof}
% Before the change of line susceptance, one optimal solution is obtained, the reduced cost is $c_{j}-c_{B}^{\top}B^{-1}A_{j} \ge 0$ for all $j\in N$. If the changes will affect $B$, all these formula might not hold true anymore.
% \end{proof}

\edit{Propositions \ref{thm:redcostcheck_nonbasic} and \ref{thm:redcostcheck_basic} identify the condition under which the reduced-cost check can be simplified. We next connect the optimality check with the sensitivity metric developed earlier, and show that, when only nonbasic columns are affected by a line switching action, a positive sensitivity in \eqref{derivative of optimal cost to susceptance} implies that at least one of the modified nonbasic variables has negative reduced cost and therefore can enter the basis in a cost-improving one-step pivot.}

\begin{Proposition}
    Consider opening line $\ell=(i,j)$. Suppose this action modifies only nonbasic columns of the constraint matrix $A$ in \eqref{eq:standard}. If the sensitivity of the optimal cost with respect to $b_{ij}$ in \eqref{derivative of optimal cost to susceptance} is positive, then at least one reduced cost associated with the modified nonbasic columns is negative.
\end{Proposition}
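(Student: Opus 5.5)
The plan is to compute the updated reduced costs of the four modified columns explicitly. I will then show they come in $\pm$ pairs of a single scalar, and that this scalar is nonzero exactly when the sensitivity \eqref{derivative of optimal cost to susceptance} is nonzero.

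\textbf{Step 1: identify the modified columns and their data.} Opening $\ell=(i,j)$ sets $b_{ij}$ to $0$. In $A$ this changes exactly the columns of $\theta^+_i,\theta^+_j,\theta^-_i,\theta^-_j$: it changes their entries in the power-balance block, through $B^0$, and in the two line-flow rows of $\ell$, through $D(A^0)^\top$. All four variables carry zero cost in $c$. The $\theta^+_k$ and $\theta^-_k$ columns are negatives of each other, both before and after the change.

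\textbf{Step 2: original reduced costs of these columns vanish.} Let $y=(B^{-1})^\top c_B$ be the simplex multipliers of the optimal basis. Up to the sign convention fixed by the standard form, I identify $y$ with the duals $(\lambda^*,\bar\mu^*,\ubar\mu^*,\bar\nu^*,\ubar\nu^*)$ of \eqref{eq:OPF}. The original reduced cost of $\theta^+_k$ is $r_k=-y^\top A_{\theta^+_k}$, and that of $\theta^-_k$ is $-r_k$. Both columns are nonbasic at an optimal basis, so both reduced costs are nonnegative, which forces $r_k=0$ for $k=i,j$. This is just the stationarity of the Lagrangian \eqref{eq:Lagrangian} in $\theta$.

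\textbf{Step 3: the change in reduced cost.} Write $A'_{\theta^+_i}=A_{\theta^+_i}+\Delta_i$. The perturbation $\Delta_i$ removes $b_{ij}$ from the $(i,i)$ and $(j,i)$ entries of $B^0$ and removes the $\pm b_{ij}$ entries of column $i$ in the two flow rows of $\ell$. Hence the updated reduced cost is $\bar c'_{\theta^+_i}=-y^\top\Delta_i=\pm b_{ij}\,s_{ij}$, where
\[
s_{ij}\coloneqq \bar\mu^*_{ij}-\ubar\mu^*_{ij}+\lambda^*_i-\lambda^*_j .
\]
Because the column for $j$ enters with the opposite incidence sign, and the $\theta^-$ columns are negations of the $\theta^+$ columns, the four updated reduced costs are $b_{ij}s_{ij},\,-b_{ij}s_{ij},\,-b_{ij}s_{ij},\,b_{ij}s_{ij}$ in some order. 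By Proposition~\ref{thm:redcostcheck_nonbasic}, these four are the only reduced costs that need rechecking.

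\textbf{Step 4: conclude.} By \eqref{derivative of optimal cost to susceptance}, the sensitivity equals $s_{ij}(\theta^*_i-\theta^*_j)$. If it is positive, then $s_{ij}\neq 0$, and $b_{ij}>0$ for an inductive line. So the pair $\pm b_{ij}s_{ij}$ contains a strictly negative entry, and that variable can enter the basis in a cost-improving pivot. I will also note that if the sensitivity is positive then $\theta^*_i\neq\theta^*_j$, which deserves a remark about degenerate bases, since nonbasic angle columns suggest zero angles. The argument only uses $s_{ij}\neq0$, so it holds regardless.

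\textbf{Main obstacle.} The delicate part is Step 3's bookkeeping. I must match the sign convention of $y$ to that of $(\lambda,\mu)$ in \eqref{eq:Lagrangian}, given how the slacks $\bar f^+,\ubar f^+$ and the $e\otimes I$ blocks are arranged in $A$. I must also confirm that the flow-row and balance-row contributions combine into exactly $s_{ij}$ rather than, say, $\bar\mu-\ubar\mu-(\lambda_i-\lambda_j)$. I would settle this by differentiating $y^\top A'x$ along the $\theta_i$ direction and comparing it termwise with $\partial\mathcal L/\partial\theta_i$. The $\pm$ pairing means the final conclusion does not depend on the overall sign.
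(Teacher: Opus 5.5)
Your proof is correct, but it is a different argument from the paper's. The paper never evaluates a reduced cost explicitly. It argues by contradiction through the value function: $\partial v/\partial b_{ij}>0$ gives $v(b_{ij}-\epsilon)<v(b_{ij})$. If every modified reduced cost stayed nonnegative at $b_{ij}-\epsilon$, Proposition~\ref{thm:redcostcheck_nonbasic} would keep the current basis optimal with the same basic solution and cost, a contradiction. The paper then uses that each reduced cost is affine in $b_{ij}$ to carry the negative sign from $b_{ij}-\epsilon$ down to $0$.

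You instead compute the four updated reduced costs in closed form, and the sign bookkeeping you flagged does close. Matching $c^\top x+y^\top(b-Ax)$ with \eqref{eq:Lagrangian} gives $y=(\lambda^*,-\bar{\mu}^*,-\ubar{\mu}^*,-\bar{\nu}^*,-\ubar{\nu}^*)$. The perturbation of the $\theta_i^+$ column is $b_{ij}$ times $e_i-e_j$ in the balance rows, $-e_\ell$ in the upper-flow rows and $+e_\ell$ in the lower-flow rows. Hence $\bar c'_{\theta_i^+}=-b_{ij}s_{ij}$, and $\theta_i^-,\theta_j^+,\theta_j^-$ get $+b_{ij}s_{ij}$, $+b_{ij}s_{ij}$, $-b_{ij}s_{ij}$. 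The combination is exactly $s_{ij}$, as it must be, because $-y^\top(\partial A/\partial b_{ij})x=s_{ij}(\theta_i-\theta_j)$ is the bilinear form behind \eqref{derivative of optimal cost to susceptance}.

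Your route needs no differentiability of $v$ and no $\epsilon$-argument, and it names the offending columns. It also shows the conclusion holds whenever $s_{ij}\neq0$, regardless of the sign of the sensitivity, so the negative reduced cost really comes from the free-variable split $\theta=\theta^+-\theta^-$. The price is that you must read the multipliers in \eqref{derivative of optimal cost to susceptance} as the simplex multipliers of the current basis, which the paper's value-function phrasing does not require.

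Your side remark is sharper than a degeneracy issue. Nonbasic variables are zero, so the current basic solution has $\theta_i^*=\theta_j^*=0$, and the formula evaluated there is $0$. That solution also stays feasible at unchanged cost for every value of $b_{ij}$, so $v(b')\le v(b_{ij})$ for all $b'$, and if $v$ is differentiable, $\partial v/\partial b_{ij}=0$. Under the paper's value-function reading the hypothesis can therefore never hold. Positivity can only come from evaluating the formula at some other optimal $\theta^*$, which is exactly the case your $s_{ij}\neq0$ argument still covers.
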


\begin{proof}
    \edit{Let $v(b_{ij})$ denote the optimal cost of \eqref{eq:standard} as a function of line susceptance $b_{ij}$. Since $\partial v(b_{ij})/\partial b_{ij} > 0$, there exists a sufficiently small $\epsilon>0$ such that
    \begin{equation} \label{eq:prop3:vcomp}
        v(b_{ij}-\epsilon) < v(b_{ij}),
    \end{equation}
    i.e., the optimal cost reduces after the susceptance change.}
    
    \edit{Suppose, for contradiction, that all reduced costs associated with the modified nonbasic columns remain nonnegative when the susceptance is changed from $b_{ij}$ to $b_{ij}-\epsilon$. Since only nonbasic columns are affected, Proposition \ref{thm:redcostcheck_nonbasic} implies that the current basis remains optimal for the perturbed problem. This would imply that the optimal objective value does not decrease, contradicting \eqref{eq:prop3:vcomp}. Therefore, for at least one modified nonbasic column, the corresponding reduced cost becomes negative at $b_{ij}-\epsilon$.}

    \edit{Now consider such a modified nonbasic column $A_k(b_{ij})$. Since the line susceptance enters the matrix $A$ linearly, the updated column $A_k(b_{ij})$ depends affinely on $b_{ij}$. It follows that the reduced cost $\bar c_k(b_{ij}) = c_k - c_B^\top B^{-1}A_k(b_{ij})$ is also an affine function of $b_{ij}$. Because $\bar c_k(b_{ij}) \ge 0$ before susceptance perturbation and $\bar c_k(b_{ij}-\epsilon) < 0$ after perturbation, this affine function decreases as $b_{ij}$ decreases. Therefore,
    \begin{equation}
        \bar c_k(0) < 0.
    \end{equation}
    Hence, after line $\ell$ is opened, at least one reduced cost associated with the modified nonbasic columns is negative.}
\end{proof}

\edit{The next proposition shows that, even when the basis matrix is modified, the corresponding basic solution can be reevaluated efficiently thanks to the rank-1 nature of the perturbation.}

\begin{Proposition}
\label{thm:basis_update}
\edit{Consider opening line $\ell=(i,j)$. Suppose this action modifies at least one basic column of the constraint matrix $A$ in \eqref{eq:standard}. Let the updated basis matrix be written as
\begin{equation}
    B(\delta)=B+\delta E = B+uv^\top,
\end{equation}
where $\delta=-b_{ij}$. If $1+v^\top B^{-1}u \neq 0$, then the updated basic solution is given by
\begin{equation}
    x_B(\delta)
    =(B(\delta))^{-1}b
    =x_B^*-\frac{B^{-1}uv^\top x_B^*}{1+v^\top B^{-1}u},
\end{equation}
where $x_B^*=B^{-1}b$ is the original basic solution. Moreover, the corresponding basic objective value satisfies
\begin{equation}
    c_B^\top x_B(\delta)
    =c_B^\top x_B^*
    -\frac{c_B^\top B^{-1}uv^\top x_B^*}{1+v^\top B^{-1}u}.
\end{equation}
In particular, if $x_B(\delta)\ge 0$, then the updated basis remains feasible, and the resulting cost change is $c_B^\top x_B(\delta)-c_B^\top x_B^*$.}
\end{Proposition}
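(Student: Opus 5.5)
The plan is to invoke the Sherman--Morrison formula directly, after justifying that the perturbation of the basis matrix is indeed rank one. First I would make the rank-one structure explicit. Opening line $\ell=(i,j)$ changes $b_{ij}$ to $0$, i.e., adds $\delta=-b_{ij}$ to the susceptance. In $A$ this affects the block $B^0=A^0D(A^0)^\top$ through the term $b_{ij}(e_i-e_j)(e_i-e_j)^\top$, and the flow rows $L\otimes D(A^0)^\top$ through the single row of $D(A^0)^\top$ indexed by $\ell$, which equals $b_{\ell}(e_i-e_j)^\top$. In both places the change is an outer product of a fixed vector (built from $e_i-e_j$ in the power-balance rows and the unit vectors of the two flow-limit rows of line $\ell$) with the row vector $(e_i-e_j)^\top$ acting on the $\theta^{+}$ coordinates and its negative on the $\theta^{-}$ coordinates. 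So the change in $A$ is $\delta\,\hat E$, with $\hat E$ of rank one. Restricting to the basic columns gives $B(\delta)=B+\delta E$, where $E$ is the corresponding column submatrix, still of rank at most one. Hence $E=\tilde u\tilde v^\top$, and I set $u=\delta\tilde u$, $v=\tilde v$. The statement takes this factorization as given; the argument only needs that one exists.

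Next I would apply the Sherman--Morrison identity. Under the hypothesis $1+v^\top B^{-1}u\neq 0$, it gives
\[
(B+uv^\top)^{-1}=B^{-1}-\frac{B^{-1}uv^\top B^{-1}}{1+v^\top B^{-1}u}.
\]
This can be checked by multiplying on the left by $B+uv^\top$ and collecting the scalar $v^\top B^{-1}u$. Right-multiplying by $b$, and noting that the right-hand side $b$ is unchanged by line opening, gives $x_B(\delta)=B^{-1}b-\frac{B^{-1}u\,v^\top B^{-1}b}{1+v^\top B^{-1}u}$. Substituting $x_B^*=B^{-1}b$ yields the stated expression. Taking the inner product with $c_B$ (the basic cost vector is unchanged, since the costs do not depend on $b_{ij}$) gives the objective formula.

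Finally, for the last claim I would note that $B(\delta)$ is nonsingular, because $\det B(\delta)=\det B\,(1+v^\top B^{-1}u)\neq0$ by the matrix determinant lemma. The same index set is therefore a basis of the updated constraint matrix. If $x_B(\delta)\ge0$, then setting the nonbasic variables to zero gives a basic feasible solution of the updated problem, and its cost differs from the original by $c_B^\top x_B(\delta)-c_B^\top x_B^*$. The only step needing real care is the first: verifying that the line-$\ell$ modification of the basic columns is exactly rank one across both the $\pm B^0$ blocks and the $L\otimes D(A^0)^\top$ blocks, with consistent signs for $\theta^{+}$ and $\theta^{-}$. The rest is direct algebra.
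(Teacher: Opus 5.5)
Your proposal is correct and follows the paper's proof. Both apply Sherman--Morrison to $B+uv^\top$, right-multiply by the unchanged right-hand side $b$, substitute $x_B^*=B^{-1}b$, and premultiply by $c_B^\top$. You add two details the paper takes for granted: an explicit check that the line-opening perturbation is rank one on the $\theta^{\pm}$ columns, and a determinant-lemma argument that the same index set remains a basis. Both are correct.
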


\begin{proof}
\edit{Since $B(\delta)=B+uv^\top$ and $1+v^\top B^{-1}u \neq 0$, the Sherman-Morrison formula \cite[Sect. 3.8]{Meyer2000} yields
\begin{equation}
    (B(\delta))^{-1} 
    = B^{-1}-\frac{B^{-1}uv^\top B^{-1}}{1+v^\top B^{-1}u}.
\end{equation}
Multiplying both sides by $b$ gives
\begin{equation}
    x_B(\delta)
    =(B(\delta))^{-1}b
    =B^{-1}b-\frac{B^{-1}uv^\top B^{-1}b}{1+v^\top B^{-1}u}.
\end{equation}
Notice that the original basic solution is $x_B^*=B^{-1}b$, we have
\begin{equation}
    x_B(\delta)
    =x_B^*-\frac{B^{-1}uv^\top x_B^*}{1+v^\top B^{-1}u}.
\end{equation}
Premultiplying by $c_B^\top$ gives
\begin{equation}
    c_B^\top x_B(\delta)
    =c_B^\top x_B^*
    -\frac{c_B^\top B^{-1}uv^\top x_B^*}{1+v^\top B^{-1}u}.
\end{equation}
If $x_B(\delta)\ge 0$, then the updated basic solution is feasible for the modified basis, and therefore the corresponding cost change is $c_B^\top x_B(\delta)-c_B^\top x_B^*$.}
\end{proof}

\edit{With Propositions \ref{thm:redcostcheck_nonbasic}--\ref{thm:basis_update} in place,} the \edit{one-step pivot-based} improved heuristic algorithm is obtained by considering \edit{two cases: the topology action leads to} changes only in the nonbasic columns \edit{of $A$, or} at least one change appears in one of the basic columns.

If the topology change only affects the nonbasic columns of $A$, feasibility of the current basic solution is not affected \edit{in the case of line switching} (as both the basis matrix and the RHS vector stay the same), and we only need to check its optimality. \edit{For bus split, feasibility of the current basic solution also needs to be reexamined with respect to the updated RHS vector.} \edit{For optimality check,} it suffices to examine the reduced costs associated with the modified columns per Proposition \ref{thm:redcostcheck_nonbasic}. If \edit{all such reduced costs remain nonnegative}, the optimal solution \edit{remains unchanged}. Otherwise, a one-step pivot is performed to estimate the updated cost. 

Let 
\begin{equation}
    J \coloneqq \left\{ j\in N: c_{j}-c_{B}^{\top}B^{-1}A_{j}' < 0 \right\}
\end{equation}
be the index set of the modified nonbasic columns whose reduced costs are negative, where $A_j'$ denotes the modified nonbasic column. For each $j \in J$, the simplex ratio test gives:
\begin{equation}
    \alpha_j
    =\min\left\{\frac{(x_B^*)_i}{(B^{-1}A_j')_i}:\; (B^{-1}A_j')_i>0\right\}.
\end{equation}
The corresponding one-step pivot update is
\begin{equation}
    x_B' = x_B^* - \alpha_j B^{-1}A_j', \quad x_j' = \alpha_j,
\end{equation}
and the resulting optimal cost change is estimated by
\begin{equation}
    \Delta\mathrm{cost} = \min \left\{\alpha_j(c_{j}-c_{B}^{\top}B^{-1}A_{j}'):j\in J\right\} \label{non-basis cost change}.
\end{equation}

% Here is the step, $J=\{j\in N: (c_{j}-c_{B}^{\top}B^{-1}(A_{j}+\delta e_{j}) \le 0)\}$ and
% \begin{align}
%     \text{for }&j \in J\notag\\
%     &\delta_{i}=\text{min}\{\frac{(x_B^*)_{i}}{(B^{-1}A_{j}')_{i}}:(B^{-1}A_{j}')_{i}\ge 0\}\\
%     % &i=\text{argmin }_{i: (B^{-1}A_{j}')_{i}\ge 0}\{\frac{x_{B,i}}{(B^{-1}A_{j}')_{i}}\}\notag\\
%     % &\delta_{i} = \frac{x_{B,i}}{(B^{-1}A_{j}')_{i}} \notag\\
%     &x_{B}'=x_{B}^*-\delta_{i}B^{-1}A_{j}'\\
%     &x_{j}'=\delta_{i}
%     % &\Delta\text{cost} =\delta_{i}(c_{j}-c_{B}^{T}B^{-1}A_{j}')\notag
% \end{align}

% The cost change will be selected as:
% \begin{align}
%     \Delta\text{cost} = \text{min}\{\delta_{i}(c_{j}-c_{B}^{\top}B^{-1}A_{j}'):j\in J\} \label{non-basis cost change}
% \end{align}

\edit{We next consider the case where the control action changes} at least one basic column of $A$. \edit{In this case, both} the feasibility and optimality may change. If the susceptance of a line $\ell = (i, j)$ is reduced to zero, the updated basis matrix becomes
\begin{equation}
    B(\delta) = B + \delta E = B + uv^\top,
\end{equation}
where $\delta = -b_{ij}$. By Proposition \ref{thm:basis_update}, if $1 + v^\top B^{-1}u \neq 0$, then the updated basic solution is
\begin{equation}
    x_B(\delta) = x_B^*-\frac{B^{-1}uv^\top x_B^*}{1+v^\top B^{-1}u},
\end{equation}
and the corresponding objective value is
\begin{equation}
    c_B^\top x_B(\delta)
    =c_B^\top x_B^*
    -\frac{c_B^\top B^{-1}uv^\top x_B^*}{1+v^\top B^{-1}u}.
    \label{eq:basis_cost_change}
\end{equation}

If $x_{B}(\delta) \ge 0$, then the updated solution is still feasible, and the associated optimal cost change is estimated by
\begin{equation}
    \Delta \mathrm{cost}
    =c_B^\top x_B(\delta)-c_B^\top x_B^* = -\frac{c_B^\top B^{-1}uv^\top x_B^*}{1+v^\top B^{-1}u}.
\end{equation}

On the other hand, if $x_B(\delta) \not\ge 0$, then the updated basic solution is infeasible, and we need to restore its feasibility before obtaining the optimal cost change estimate. For each nonbasic column $j \in N$, let the search direction $D_j$ be $D_j \coloneqq -B(\delta)^{-1} A_j$. To restore feasibility, we perform the following three operations:

First, we check whether there is an index $i$ such that $(x_B(\delta))_i < 0$ and $(D_j)_i < 0$. If there is, then feasibility cannot be restored by bringing $A_j$ into the basis. 

Second, we determine the smallest step along $D_j$ that eliminates all negative components of $x_B(\delta)$, namely
\begin{equation}
    d_j
    =\max\left\{
    \frac{-(x_B(\delta))_i}{(D_j)_i}:\;
    (x_B(\delta))_i<0,\; (D_j)_i>0
    \right\}.
\end{equation}

Third, we determine the largest step along $D_j$ that would not flip one nonnegative component of $x_B(\delta)$ to negative, namely
\begin{equation}
    \alpha_j
    =\min\left\{
    \frac{-(x_B(\delta))_i}{(D_j)_i}:\;
    (x_B(\delta))_i \ge 0,\; (D_j)_i < 0
    \right\}.
\end{equation}
The nonbasic variable can produce a feasible updated basic solution if and only if $d_j \le \alpha_j$. If no such $j\in N$ exists, then no feasible one-step update can be obtained from the perturbed basis, and the corresponding candidate action is discarded in the improved heuristic. 

Among all nonbasic variables that produce a feasible updated basic solution, regardless of whether $x_B(\delta) \ge 0$, the estimated optimal cost change due to the one-step pivot is
\begin{equation}
    \hspace{-0.04in}\Delta \mathrm{cost}
    =\min\bigl\{
    c_B^\top\bigl(x_B(\delta)+\alpha_jD_j - x_B^*\bigr)+c_j \alpha_j: j\in N
    \bigr\}.
    \label{eq:basic_cost_change_infeasible}
\end{equation}

The same one-step pivot framework applies to the restricted bus split model considered in this paper. Here, the bus split amounts to a single line opening and an equivalent power transfer, and the above derivations carry over accordingly. Let $b'$ be the updated RHS vector considering power transfer. The feasibility condition is $x_B'=B^{-1}b'\ge0$. If the feasibility condition holds and only the nonbasic columns of the constraint matrix are updated, we proceed with a modified version of \eqref{non-basis cost change} by adding $c_B^\top (x_B' - x_B^*)$:
\begin{equation} \label{eq:basic_cost_change_feasible_bs}
    \Delta\mathrm{cost} = \min \left\{\alpha_j(c_{j}-c_{B}^{\top}B^{-1}A_{j}') + c_B^\top (x_B' - x_B^*):j\in J\right\}.
\end{equation}

On the other hand, if the feasibility condition ceases to hold, or when some of the basic columns of $A$ are modified, we proceed with the update procedure leading up to \eqref{eq:basic_cost_change_infeasible}, with the changes that $x_B^*$ is replaced by $(x_B^*)' \coloneqq B^{-1} b'$, and $x_B(\delta)$ replaced by $x_B'(\delta)$ accordingly as its definition depends on $x_B^*$. The estimated optimal cost change becomes:
\begin{equation}
    \hspace{-0.04in}\Delta \mathrm{cost}
    =\min\bigl\{
    c_B^\top\bigl(x_B'(\delta)+\alpha_jD_j - (x_B^*)'\bigr)+c_j \alpha_j: j\in N
    \bigr\}.
    \label{eq:basic_cost_change_infeasible_bs}
\end{equation}

\begin{algorithm}[t]
\caption{Improved heuristic algorithm}
\label{alg:lineoutage}
\begin{algorithmic}[1]
\Require Multipliers $\lambda^*, \ubar{\mu}^*, \bar{\mu}^*, \ubar{\nu}^*, \bar{\nu}^*$, line power flow $f^*$, and number of candidates $T$ .
\Ensure Best candidate action (line switching or bus split)
\State Solve optimization problem \eqref{eq:OPF} to obtain $x^* := ((p^g)^*, \theta^*, f^*, \lambda^*, \ubar{\mu}^*, \bar{\mu}^*, \ubar{\nu}^*, \bar{\nu}^*)$
\For{each switchable line $(i,j)$}
    \State Estimate optimal cost change based on \eqref{eq:dv}
    % \[
    %   \Delta C_{i,j} \approx -f_{i,j}\left(\overline{\mu}_{i,j} - \underline{\mu}_{i,j} + \lambda_i - \lambda_j \right) \hfill \eqref{result of approximated cost change}
    % \]
\EndFor
\For{each splittable bus $n$}
    \State Estimate optimal cost change based on \eqref{total bus split cost change}
    % \[
    %   \Delta C_{i} = \displaystyle\sum_{j \in D_{i}^{-}}\Delta C_{i,j} \hfill \eqref{bus split sensitivity factor}
    % \]
\EndFor
\State Rank lines and buses in an ascending order by their $\Delta v$
\State Choose top $T$ candidates of lines and buses
\For{each candidate \edit{of line switch}}
    \If{Only nonbasic columns of $A$ in \eqref{eq:standard} are modified}
        \State Compute cost change based on \eqref{non-basis cost change}
    \Else
        \State Compute cost change based on \eqref{eq:basic_cost_change_infeasible}
    \EndIf
\EndFor
\edit{
\For{each candidate \edit{of bus split}}
    \If{Only nonbasic columns of $A$ in \eqref{eq:standard} are modified and the updated optimal solution $x' = B^{-1}b'$ is feasible}
        \State Compute cost change based on \eqref{eq:basic_cost_change_feasible_bs}
    \Else
        \State Compute cost change based on \eqref{eq:basic_cost_change_infeasible_bs}
    \EndIf
\EndFor
}
\State Sort candidates in ascending order by updated cost change
\State \Return Action of the first candidate
\end{algorithmic}
\end{algorithm}

\section{Numerical Experiments and Validation}

To validate the quality and computational performance of the proposed improved heuristic algorithm, we benchmark it against existing methods regarding their performance in optimal cost reduction, computational time, and the corresponding action (line switching or bus split). \edit{Six} methods are denoted as \edit{$M0$}, $M1, M2, M3, M4$ and $M5$ respectively, where \edit{$M0$ and }$M1$ correspond to the \edit{price difference switching criterion and }line profits switching criterion described in \cite{ots}, which selects the open line by
\edit{\begin{align}
    &k = \arg \max_{l=(i,j)\in \mathcal{E}} \left\{|\pi_i-\pi_j| \,\big|\, f_{l}(\pi_i-\pi_j) < 0\right\} \label{Line Profits Switching Criterion I}
\end{align}
and}
\begin{align}
    &k = \arg \max_{l=(i,j)\in \mathcal{E}} \left\{|f_{l}(\pi_i-\pi_j)| \,\big|\, f_{l}(\pi_i-\pi_j) < 0\right\} \label{Line Profits Switching Criterion I}
\end{align}
\edit{respectively, }where $\pi$ is the nodal prices. $M2$ is \edit{our} proposed sensitivity-based method considering line switching, which is given with details in \eqref{eq:dv}. $M3$ is also a sensitivity-based method, but it combines bus split and line switching and the sensitivity factor calculation is shown in \eqref{total bus split cost change}, the splittable buses are those buses that connect with at least two lines. $M4$ is the improved heuristic algorithm, which is based on $M3$ after conducting one-step pivot and its details can be found in \textbf{Algorithm 1}. \edit{In the experiments, the number of top candidates $T$ is set to be 6. }$M5$ is the exact solution method \edit{from \cite{zhou2021substation}}, which is given in \eqref{eq:otr}.

Table~\ref{test case} presents the results obtained from applying the \edit{six} methods to nine IEEE test systems, with system sizes up to 13,659 buses. Only one action will be conducted for each method \edit{and N/A means the solution is infeasible after conducting the corresponding action}. \edit{In order to compare our proposed OTR heuristic method where line open is modeled as setting the negative line susceptance to be zero with the method in the literature,} further comparison \edit{among $M0$}, $M1$, and $M2$ is included, where a maximum of 5 lines can be opened, and the final network topology is obtained when there is no feasible line to open. The final optimal cost, running time, and number of open lines \edit{are} shown in Table~\ref{iterated line open}. The analysis is conducted in MATLAB 2024a utilizing the \textsc{Matpower} toolbox. All computations were performed on a desktop computer equipped with an Intel(R) Xeon(R) W-1370 @ 2.90 GHz processor and 32 GB RAM.

\begin{table*}[htbp]
\centering
\caption{Experiment Results of Conducting One Action}
\label{test case}
\renewcommand{\arraystretch}{1.1}
\setlength{\tabcolsep}{4pt}
\resizebox{\textwidth}{!}{
\begin{tabular}{|c|c|ccc|ccc|ccc|}
\hline
\multirow{2}{*}{No.} &
\multirow{2}{*}{Case} &
\multicolumn{3}{c|}{M0} &
\multicolumn{3}{c|}{M1} &
\multicolumn{3}{c|}{M2} \\
\cline{3-11}
 & &
 Cost (\$) & Time (s) & Action
 & Cost (\$) & Time (s)& Action
 & Cost (\$)& Time (s)& Action\\
\hline
1 & 118\_ieee & $1.63e3$ & $0.003$ & line (82, 96) & $1.87e3$ & $0.005$ & line (69,77) & $1.94e3$ & $0.004$ & line (80,96) \\
2 & 1354\_pegase  & N/A & $0.04$ &  line (1074,802) & N/A & $0.04$ & line (1074,802) & $7.31e4$ & $0.04$ & line (256,284)\\
3 & 1888\_rte  & N/A & $0.07$ & line (692,1)& N/A & $0.08$ & line (692,1) & $5.91e4$ & $0.08$ & line (1250,350)\\
4 & 2383\_wp  & $1.40e6$ & $0.11$ & line (310, 6)& $1.32e6$ & $0.12$ & line (31,6) & $1.38e6$ & $0.12$ & line (76,105)\\
5 & 2746\_wp  & $1.16e6$ & $0.15$ & line (387, 223)& $1.16e6$ & $0.16$ & line (2669,199) & $1.16e6$ & $0.17$ & line (2670,2672) \\
6 & 2869\_pegase  & $1.33e5$ & $0.22$ & line (1586,963)& $1.33e5$ & $0.23$ & line (1586,963) & $1.33e5$ & $0.22$ & line (2120,841)\\
7 & 3375\_wp  & $6.42e6$ & $0.25$ & line (812,383)& $6.42e6$ & $0.25$ & line (383, 381) & $6.44e6$ & $0.26$ & line (544,545)\\
8 & 6470\_rte & $9.66e4$ & $1.09$ & line (4644,1) & $9.66e4$ & $1.22$ & line (4644,1) & $9.66e4$ & $1.32$ & line (6146,6219)\\
9 & 13659\_pegase  & $3.82e5$ & $50.53$ & line (7519,4482)& $3.82e5$ & $51.29$ & line (7519,4482) & $3.82e5$ & $46.80$ & line (4543,5236)\\

\hline
\hline

\multirow{2}{*}{No.} &
\multirow{2}{*}{Case} &
\multicolumn{3}{c|}{M3} &
\multicolumn{3}{c|}{M4} &
\multicolumn{3}{c|}{M5} \\
\cline{3-11}
 & &
 Cost (\$) & Time (s) & Action
 & Cost (\$) & Time (s)& Action
 & Cost (\$)& Time (s)& Action\\
\hline
1 & 118\_ieee & $1.52e3$ & $0.01$ & bus 96 & $1.52e3$ & $0.04$ & bus 96 & $1.43e3$ & $4.04$ & bus 82\\
2 & 1354\_pegase & $7.31e4$ & $0.06$ &  line (256, 284) & $7.31e4$ & $0.1$ & bus 256 & $7.31e4$ & $34.58$ & bus 520\\
3 & 1888\_rte & $5.91e4$ & $0.12$ & line (1250, 1242) & $5.91e4$ & $0.22$ & bus 1242 & $5.91e4$ & $24.72$ & bus 1505\\
4 & 2383\_wp & $1.32e6$ & $0.14$ & bus 126 & $1.32e6$ & $0.28$ & bus 126 & $1.31e6$ & $92.99$ & bus 501\\
5 & 2746\_wp  & $1.16e6$ & $0.19$ & bus 462 & $1.16e6$ & $0.38$ & bus 462 & $1.09e6$ & $107.85$ & bus 2454\\
6 & 2869\_pegase & $1.32e5$ & $0.27$ & line (1410, 1040) & $1.32e5$ & $0.48$ & bus 1040 & $1.32e5$ & $144.27$ & bus 414\\
7 & 3375\_wp & $6.42e6$ & $0.31$ & bus 544 & $6.42e6$ & $0.54$ & bus 544 & $6.42e6$ & $181.46$ & bus 136 \\
8 & 6470\_rte & $9.66e4$ & $1.47$ & bus 6162 & $9.66e4$ & $2.10$ & bus 6162 & $9.66e4$ & $221.12$ & bus 6062\\
9 & 13659\_pegase & $3.82e5$ & $48$ & bus 4325 & $3.82e5$ & $50.06$ & bus 4325 & $3.81e5$ & $1571.58$ & bus 3876\\
\hline
\end{tabular}
}
\end{table*}

% \begin{figure}
%     \centering
%     \includegraphics[width=0.85\linewidth]{figures/visualization.png}
%     \caption{Cost and Time Comparison of 2383 Buses System}
%     \label{fig:visualization}
% \end{figure}

\begin{table*}[h]
\centering
\caption{Comparison of Opening Lines Iteratively Among $M0$, $M1$ and $M2$}
\label{iterated line open}
\renewcommand{\arraystretch}{1.1}
\setlength{\tabcolsep}{4pt}

% \begin{tabular}{|c|ccc|ccc|}
% \hline
% \multirow{2}{*}{Case} &
% \multicolumn{3}{c|}{M1} &
% \multicolumn{3}{c|}{M2} \\
% \cline{2-7}
% &
%  Cost (\$)& Time (s) & No.
%  & Cost (\$)& Time (s)& No.
% \\
% \hline
% 118\_ieee & 1.64e3 & 0.02 & 15 & 1.52e3 & 0.02 & 15\\
% 1354\_pegase & 7.31e4 & 0.08 & 2 & 7.31e4 & 0.6 & 15\\
% 1888\_rte & 5.91e4 & 0.001 & 1 & 5.91e4 & 1.18 & 15\\
% 2383\_wp & 1.44e6 & 1.95 & 15 & 1.32e6 & 0.58 & 5\\
% 2746\_wp & 1.26e6 & 2.68 & 15 & 1.26e6 & 2.54 & 15\\
% 2869\_pegase & 1.32e5 & 0.001 & 1 & 1.32e5 & 3.91 & 15\\
% 3375\_wp & 6.37e6 & 5.04 & 15 & 6.37e6 & 7.16 & 15\\
% 6470\_rte & 9.66e4 & 0.001 & 1 & 9.66e4 & 17.91 & 15\\
% 13659\_pegase & 3.81e5 & 0.001 & 1 & 3.81e5 & 301.57 & 6\\
% \hline
% \end{tabular}
% \end{table}

\begin{tabular}{|c|ccc|ccc|ccc|}
\hline
\multirow{2}{*}{Case} &
\multicolumn{3}{c|}{M0} &
\multicolumn{3}{c|}{M1} &
\multicolumn{3}{c|}{M2} \\
\cline{2-10}
&
 Cost (\$)& Time (s) & No.
 & Cost (\$)& Time (s)& No.
 & Cost (\$)& Time (s)& No.
\\
\hline
118\_ieee & $2.16e3$ & $0.64$ & $5$ & $1.64e3$ & $0.003$ & $2$ & $1.52e3$ & $0.48$ & $5$\\
2383\_wp & $1.39e6$ & $0.62$ & $5$ & $1.44e6$ & $0.64$ & $5$ & $1.32e6$ & $0.62$ & $5$ \\
2746\_wp & $1.26e6$ & $0.86$ & $5$ & $1.26e6$ & $0.87$ & $5$ & $1.26e6$ & $0.96$ & $5$ \\
3375\_wp & $6.30e6$ & $1.89$ & $5$ & $6.37e6$ & $1.53$ & $5$ & $6.37e6$ & $1.33$ & $5$ \\
\hline
\end{tabular}
\end{table*}

The experimental results reveal several key observations. \edit{First, for cases 1354\_pegase and 1888\_rte, there is no feasible solution after opening the line according to the criteria of $M0$ and $M1$. While for our proposed methods, feasible solutions can be found.} Additionally, \edit{in all cases,} methods $M3$ and $M4$ demonstrate superior performance among \edit{all }the sensitivity-based approaches, yielding operational costs only marginally higher than those obtained by the exact solution method. Regarding computational efficiency, all sensitivity-based methods exhibit excellent performance, whereas method $M5$ requires substantially longer computation time.

Comparing methods \edit{$M0$, }$M1$ and $M2$, \edit{they all} employ sensitivity-based approaches considering line switching only. \edit{We can see that $M0$ achieves the most cost saving in case 118\_ieee and $M1$ lowers the cost most significantly in case 2383\_wp. However, our proposed line switching heuristic method can always yield a feasible solution. }\edit{We have a further analysis for case 2746\_wp, where the operational cost decreases after line switching and we notice that the generation from generators near the open line has an obvious drop while the generation levels for distant generators remain roughly the same or even increase slightly. This suggests that line switching has a local effect and multiple actions should be considered to achieve additional cost saving.} 

A further comparison \edit{among $M0$,} $M1$ and $M2$ is made, where they can select up to 5 lines to open iteratively by conducting the sensitivity-based method at \edit{some cases}, and the details could be found in Table~\ref{iterated line open}. \edit{Notably}, \edit{$M2$ has the most significant cost decrease}. \edit{Take case 118\_ieee for example, when only one line is allowed to open, our proposed line switching sensitivity method performs worse than $M0$ and $M1$. However, if multiples line can be opened, the proposed method achieves the best performance. Methods $M3$ and $M4$ have optimal results in most cases and near optimal results in some cases for conducting one action and they are not included in the multiple action experiments. The multiple actions experiments are further compared among the OTR heuristic methods.} 

Method $M3$ extends $M2$ by incorporating bus split, achieving improvements of 21.65\% and 4.35\% for cases 118\_ieee and 2383\_wp, respectively, compared to considering line switching alone. This demonstrates that incorporating bus split into network topology reconfiguration significantly enhances system dispatch cost reduction. Comparing methods $M3$ and $M4$, the latter performs a one-step pivot based on the former, resulting in identical costs across all cases, \edit{even though the improved heuristic algorithm $M4$ chooses a different action at cases of 1354\_pegase, 1888\_rte, and 2869\_pegase}. Both methods achieve dispatch costs equivalent to the exact solution method in most cases, despite selecting different switching actions. However, for cases 118\_ieee and 2746\_wp, their optimal costs exceed those of $M5$ by 6.29\% and 6.42\%, respectively. While the one-step pivot operation in $M4$ increases computational time compared to $M3$, both remain within the same order of magnitude. Considering both operational cost and computational efficiency, methods $M3$ and $M4$—the sensitivity-based approach \edit{of OTR}, and the improved heuristic algorithm, respectively—demonstrate the most outstanding performance among sensitivity-based methods across all nine test cases, spanning from small to large-scale systems.

Beyond the per-case comparisons, three broader patterns emerge from the nine IEEE test systems. First, the cost-time trade-off of the proposed sensitivity-based method (M3) and the improved heuristic algorithm (M4) remains favorable as system size grows: for the largest instance (13659\_pegase), both achieve essentially the same operational cost as the exact solution (3.82e5 vs. 3.81e5) while requiring only $\sim$50s compared to 1571.58s for $M5$. Second, the absolute cost gap between $M3/M4$ and $M5$ diminishes with system size, reaching sub-percent levels for the largest case, indicating that the first-order sensitivity evaluation becomes increasingly accurate in large networks. Third, multiple cases exhibit identical costs under different actions (e.g., $M3/M4$ versus $M5$), which empirically validates the theoretical equivalence between bus split and line switching established in the modeling framework: distinct actions can achieve the same optimal objective value. \edit{$M4$ is based on $M3$ by doing one-step pivot, while they achieve the same cost at each case and the former one has a slight increase of computational time. So $M3$ will be selected as the optimal sensitivity-based method among those methods. However, when comparing these with $M5$ which is the exact method considering both line switching and bus split, our proposed method has higher cost in some small and medium test cases, which results from only considering the first order derivative of optimal cost with respect to negative line susceptance.} Collectively, these observations highlight how the innovations, namely modeling line switching via susceptance changes, unifying line switching and bus split within the same sensitivity framework, and employing the computationally efficient sensitivity-based heuristic algorithm rather than costly re-optimization, yield near-exact costs at a fraction of the computation time across systems ranging from small to very large scale.

\section{Conclusion}

In this paper, we propose a sensitivity-based heuristic algorithmic framework that jointly models and optimizes line switching and bus split in an OTR problem. We first derive the sensitivity of line switching by applying the \edit{envelope theorem}, which is then used to estimate the optimal cost change of OPF with respect to line switching, yielding a first-order cost approximation. We then incorporate bus-split operations into the OTR problem and prove that, with respect to their impact on line flows, a bus split can be viewed as a generalized line switching \edit{and some additional operations including power transfer and line connections}. \edit{In doing so, }we derive a simpler form of the BSDF, which quantifies the line-flow changes induced by a bus split, and integrate bus splitting into the sensitivity-based framework. Subsequently, we develop an improved heuristic that, for the top candidate actions selected by the sensitivity metrics, performs a one-step simplex pivot. Finally, the optimal action can be selected from the updated sensitivities.

Comprehensive simulations on nine test cases ranging from 118 to 13,659 buses compare \edit{six} methods in operational cost, computational time, and selected actions. Across all systems, the proposed approaches attain operational costs closest to the exact solution while preserving fast runtimes. Besides, incorporating bus split further improves savings over line switching alone, demonstrating scalability and practical effectiveness on large, realistic grids.

\balance
\bibliographystyle{IEEEtran}
\bibliography{pscc.bib}

% that's all folks
\end{document}